\def\ba{\begin{array}}
\def\ea{\end{array}}
\def\baa{\begin{align}}
\def\eaa{\end{align}}
\newcommand{\beq}{\begin{equation}}
\newcommand{\eeq}{\end{equation}}
\newcommand{\bq}{\begin{eqnarray}}
\newcommand{\eq}{\end{eqnarray}}
\newcommand{\bqn}{\begin{eqnarray*}}
\newcommand{\eqn}{\end{eqnarray*}}
\newcommand{\bee}{\begin{enumerate}}
\newcommand{\eee}{\end{enumerate}}
\newcommand{\bi}{\begin{itemize}}
\newcommand{\ei}{\end{itemize}}
\newcommand{\diag}{\mathrm{diag}}
\newcommand{\wang}[1]{\ifthenelse{\boolean{showcomments}}
{ \textcolor[rgb]{1,0,1}{(ZW:  #1)}}{}}
\newcommand{\fliu}[1]{\ifthenelse{\boolean{showcomments}}
{ \textcolor{red}{(FL:  #1)}}{}}
\newcommand{\zhao}[1]{\ifthenelse{\boolean{showcomments}}
{ \textcolor{green}{(CZ:  #1)}}{}}
\newcommand{\slow}[1]{\ifthenelse{\boolean{showcomments}}
{ \textcolor{blue}{(SL:  #1)}}{}}
\theoremstyle{definition}
\newtheorem{theorem}{Theorem}
\newtheorem{lemma}[theorem]{Lemma}
\theoremstyle{definition}
\newtheorem{definition}{Definition}
\newtheorem{remark}{Remark}
\let\MYoriglatexcaption\caption
\renewcommand{\caption}[2][\relax]{\MYoriglatexcaption[#2]{#2}}
\title{Distributed Frequency Control with Operational Constraints, Part I: Per-Node Power Balance}
        \author{Zhaojian~Wang, Feng~Liu, Steven~H.~Low,~\IEEEmembership{Fellow,~IEEE,}
                Changhong~Zhao,  and~Shengwei~Mei~\IEEEmembership{Fellow,~IEEE}
                \thanks{This work was supported  by the National Natural Science Foundation
                        of China ( No. 51677100, No. 51377092,  No. 51621065), Foundation of Chinese Scholarship Council (CSC No. 201506215034), 
the US National Science Foundation through awards EPCN 1619352, CCF 1637598, 
CNS 1545096, ARPA-E award DE-AR0000699, and Skoltech through Collaboration
                        Agreement 1075-MRA.     }       
                \thanks{Z. Wang, F. Liu and S. Mei are with the Department
                        of Electrical Engineering, Tsinghua University, Beijing,
                        China, 100084 e-mail: (lfeng@tsinghua.edu.cn).}
                \thanks{S. H. Low and C. Zhao  are with the Department
                        of Electrical Engineering, California Institute of Technology, Pasadena, CA, USA, 91105 e-mail:(slow@caltech.edu)}
        }
\begin{document}

                \maketitle

\begin{abstract}                                                
    This paper addresses the distributed optimal frequency control of multi-area power system with operational constraints, including the regulation capacity of individual control area and the power limits on tie-lines. Both generators and controllable loads are utilized to recover nominal frequencies while minimizing regulation cost. We study two control modes: the per-node balance mode and the network balance mode. In Part I of the paper, we only consider the per-node balance case, where we derive a \emph {completely decentralized} strategy without the need for communication between control areas. It can adapt to unknown load disturbance. The tie-line powers are restored after load disturbance, while the regulation capacity constraints are satisfied both at equilibrium and during transient.  We show that the closed-loop systems with the proposed control strategies carry out primal-dual updates for solving the associated centralized frequency optimization problems. We further prove  the closed-loop systems are asymptotically stable and converge to the unique optimal solution of the centralized frequency optimization problems and their duals. Finally, we present simulation results to demonstrate the effectiveness of our design. In Part II of the paper, we address the network power balance case, where transmission congestions are managed continuously. 
%
%
\end{abstract}
                
\begin{IEEEkeywords}
    Power system dynamics, frequency control; per-node power balance; decentralized control.
\end{IEEEkeywords}

                \IEEEpeerreviewmaketitle

\section{Introduction}

In a modern large-scale power system, multiple regional grids are usually interconnected for improving operation reliability and economic efficiency \cite{Min:Total, Ahmadi:Probabilistic}. In each control area, power generation and controllable load can be utilized to eliminate power imbalance and maintain frequency stability in real time. Generally, frequency control is a paid service, and hence control areas always try to minimize their control cost. As different control areas may belong to different utilities and global information may not be accessible due to privacy and operational considerations, a distributed strategy is desirable. Roughly speaking, there are two possible modes of operation. In the first mode, each node area balances its own supply and demand after a disturbance. Then the power flow on each tie line should be regulated to its scheduled value, i.e., the deviation of power flows on the tie lines are eliminated in equilibrium. In the second mode, all nodes cooperate to rebalance power
over the entire network after a disturbance. The power flows on the tie lines may deviate from their scheduled values but must satisfy line limits in equilibrium. We refer to the first case as \emph{per-node (power) balance} and the second \emph{network (power) balance}. Here we focus on the first case, while the second case will be addressed in Part II of the paper. We design a decentralized optimal frequency controller for restoring frequency and tie-lime power under operational constraints, including regulation capacity constraints.

Different distributed strategies have been developed in the  literature for frequency control.
They can roughly be divided into two categories in terms of different types of regulation resources: the automatic generation control (AGC) e.g. \cite{Ibraheem:Recent, Variani:Distributed, Stegink:aunifying, LiZhaoChen2016, Stegink:aport, Wang:Distributed, Wang:Decentralized} and the load-side frequency control e.g.\cite{Ilic:Modeling, Zribi:Adaptive, Changhong:Design, Changhong:Decentralized, Mallada-2017-OLC-TAC, Kasis:Primary1, Devane:Primary2}. The former  focuses on generation regulation. 
For example in 
\cite{Variani:Distributed} a flatness-based control combining trajectory generation and trajectory tracking is proposed for AGC in multi-area power system.  {In \cite{Stegink:aunifying, Stegink:aport}, the closed-loop system composed of power system dynamics and controller dynamics is formulated as a  port-Hamiltonian system, and its stability is proved. } In \cite{LiZhaoChen2016}, generators are driven by AGC to restore  frequencies. Correspondence between the (partial) primal-dual gradient algorithm for solving the associated optimization problem and the  frequency control dynamics of the physical system is established. The resulting decomposition enables  the system design of a fully distributed optimal frequency control.

For the load-side frequency control, 
load frequency dynamics are formulated similarly to the generator model in \cite{Ilic:Modeling}, leading to a distributed frequency control for both generation and controllable loads. 
A distributed adaptive control is presented in \cite{Zribi:Adaptive} to guarantee acceptable frequency deviation from the nominal value.  
In \cite{Changhong:Design, Changhong:Decentralized}, an optimal load control (OLC) problem is formulated and a ubiquitous primary load-side control is derived as a partial primal-dual gradient algorithm for solving the OLC problem.  It is decentralized, but does not restore the nominal frequency. 
This design approach is extended in \cite{Mallada-2017-OLC-TAC} to secondary control
that restores  nominal frequency and scheduled inter-area flows as well as enforcing
line limits in equilibrium.  {It is further extended to more general  models in \cite{Kasis:Primary1,Devane:Primary2}, where passivity condition guaranteeing stability is proposed for each local bus and the conservativeness is reduced greatly.}

In terms of methodology, there are mainly three types of distributed  frequency control: the droop based approach e.g. \cite{Maknouninejad:Optimal,Nasirian:Distributed}, the consensus based approach e.g. \cite{Olfati:Consensus,Binetti:Distributed, Xin:Cooperative} and the primal-dual decomposition based approach e.g.\cite{Changhong:Design,Changhong:Decentralized, Mallada-2017-OLC-TAC, LiZhaoChen2016}. 
In the primal-dual decomposition approach, control goals such as rebalancing power after a disturbance and
restoring nominal frequency and scheduled inter-area flows are formalized as a global optimization problem.
The feedback control laws are designed so that the equilibrium of the closed-loop system solves the
optimization problem and hence achieves the control objectives, in equilibrium. 
Moreover the closed-loop system is designed to be an asymptotically stable primal-dual algorithm 
for solving an associated optimization problem. This is the same approach taken in \cite{jokic:real} where the real-time control is through nodal proces.

In all the primal-dual algorithms proposed in the literature, even though input constraints are usually
enforced, constraints on states, such as power injections on buses, are enforced
only in the steady-state.  In practice, however, a control area always maintains regulation capacity 
bounds that constrain power generations and controllable loads within  available ranges, not only at equilibrium but also during transient. 
In this paper, we design input-saturation controllers that maintain these capacity constraints 
during transient as well. 
We show that these controllers still carry out primal-dual updates of the associated 
optimization problem. 
 
Specifically we study an approach to rebalancing power after a disturbance.
In the per-node power balance case, we require the disturbance in each control area
be balanced by generations and controllable loads in that area.  Then
we construct a completely decentralized control to recover nominal frequencies 
and tie-line power flows.
The  regulation capacity constraints are also enforced during transient.
We show that the controller together with the physical dynamics serve as  primal-dual updates with saturation for solving the optimization problem. Then we prove the optimality of our control by exploiting the equivalence between the equilibrium of the closed-loop frequency control system and  the optimal solution of the optimization problem. 
We also show that the optimal solution of the primal-dual problem and equilibrium point of the closed-loop
systems are both unique. Furthermore we prove the stability of closed-loop system by combining projection technique with LaSalle's invariance principle, mitigating the impact of nonsmooth dynamics created by the imposed transient constraints.  The salient features of our control are: 
\bee
\item \emph{Control goals:} the controller restores the nominal frequency and tie-line powers after unknown disturbance while minimizing the regulation costs; 

\item \emph{Constraints:} the regulation capacity constraints are always enforced even during transient; 


\item \emph{Communication:} it is completely decentralized without the need for communication among neighboring areas ;

%
%

\item \emph{Measurement:} the controls are adaptive to unknown load disturbances automatically without load measurement.
\eee

The rest of this paper is organized as follows. In Section II, we introduce our network model. Section III formulates the optimal frequency control problem,  presents our controller and its relationship with the primal-dual update, and proves the optimality, uniqueness, and stability of the closed-loop equilibrium point. 
We confirm the performance of  controllers via simulations on a detailed power system model in Section IV. Section V concludes the paper.

\section{Network model}


A large power network is usually composed of multiple control areas
each with its own generators and loads.  These control areas are interconnected
with each other through tie lines.  For simplicity, here we treat each control area
as a node with an aggregate power generation, an aggregate controllable load
and an aggregate uncontrollable load.\footnote{In our study, each of the nodes can be regarded as a control area including controllable generation and load. All controllable generations in the same control area are aggregated into one equivalent generator, while all controllable loads are aggregated into one controllable load. 
}
Then the power network is model by a graph ${G}:=(N, E)$ 
where  $N=\{0,1,2,...n\}$ is the set of nodes (control areas) and
$E\subseteq N\times N$ is the set of edges (tie lines).  If a pair of
nodes $i$ and $j$ are connected by a tie line
directly, we denote the tie line by $(i,j)\in E$.
Let $m:= |E|$ denote the number of tie lines.
We treat $G$ as directed with an arbitrary orientation and we use $(i,j)\in E$
or $i\rightarrow j$ interchangeably to denote a directed edge from $i$ to $j$.
It should be clear from the context which is the case.
Without loss of generality, we assume the graph is connected
and node $0$ is a reference node.

For each node $j\in N$, let $\theta_j(t)$ denote the rotor angle at node $j$ at time
$t$ and $\omega_j(t)$ the  frequency.
Let $P_j^g(t)$ denote the (aggregate) generation at node
$j$ at time $t$ and $u^g_j(t)$ its generation control command.
Let $P^l_j(t)$ denote the (aggregate) controllable load and $u^l_j(t)$ its
load control command.  Let $p_j$ denote the (aggregate) uncontrollable load. 

We adopt a second-order linearized model to describe the frequency dynamics
of each node, and two first-order inertia equations to describe the dynamics of
power generation regulation and load regulation at each node.  We assume
the tie lines are lossless and adopt the DC power flow model.
Then for each node $j\in N$,
\begin{subequations}
        \begin{align}
        \dot \theta_j & =  \omega_j(t)
        \label{eq:model.1a}
        \\
        M_j \dot \omega_j & =   P^g_j(t) - P^{l}_j(t) - p_j -D_j \omega_j(t)
        \nonumber
        \\
        &  + \sum_{i: i\rightarrow j} \! B_{ij} (\theta_i(t) - \theta_j(t))
                -  \sum_{k: j\rightarrow k} \! B_{jk}(\theta_j(t) - \theta_k(t))
        \label{eq:model.1b}
        \\
        T^g_j \dot P^g_j & =  - P^g_j(t) + u^g_j(t) - {\omega_j(t)}/{R_j}
        \label{eq:model.1c}
        \\
        T^l_j \dot P^{l}_j & =  - P^{l}_j(t) + u^l_j(t)
        \label{eq:model.1d}
        \end{align}
        where $D_j>0$ are damping constants, $R_j>0$ are droop parameters,
        and $B_{jk}>0$ are line parameters that depend on the reactance of the line $(j,k)$.
        Let  $x := (\theta, \omega, P^g, P^l)$ denote the state of the network
        and $u := (u^g, u^l)$ denote the control.\footnote{Given
        	a collection of $x_i$ for $i$ in a certain set $A$, $x$ denotes the column vector
        	$x := (x_i, i\in A)$ of a proper dimension with $x_i$ as its components.}
        \label{eq:model.1}
\end{subequations}

Our goal is to design feedback control laws for the generation command
$u^g(x(t))$ and load control $u^l(x(t))$.
The operational constraints are:
\begin{subequations}
        \bq
                \underline{P}^g_j & \leq \ P^g_j(t) \ \leq \overline{P}^g_j, \quad j\in N
        \label{eq:OpConstraints.1a}\\
                \underline{P}^l_j & \leq \ P^l_j(t) \ \leq \overline{P}^l_j, \quad j\in N
        \label{eq:OpConstraints.1b}
        \eq
        
        Differing from the literature, here \eqref{eq:OpConstraints.1a}  and \eqref{eq:OpConstraints.1b} are  \emph{hard limits} on the regulation capacities of generation and controllable load at each node, which should not be violated at any time even during transient.
     \label{eq:OpConstraints.1}
\end{subequations}
Hence we will design controllers so that these constraints are satisfied not only at equilibrium, but also during transient.

We assume that the system operates in a steady state initially, i.e., the generation
and the load are balanced and the frequency is at its nominal value.   All variables
represent \emph{deviations} from their nominal or scheduled values so that,
e.g., $\omega_j(t)=0$ means the frequency is at its nominal value.

As the generation $P^g_j$ and load $P^l_j$ in each area can increase or decrease, 
and a line flow $P_{ij}$ can in either direction, we make the following assumption.
\bi
	\item[\textbf{A1:}] 
	\bee
	\item $\underline{P}^g_j < 0 < \overline{P}^g_j$ and $\underline{P}^l_j <0 < \overline{P}^l_j$
	 for $\forall j\in N$.
	\item $\theta_0(t):=0$ for all $t\geq 0$.
	\eee
\ei  
The assumption $\theta_0 \equiv 0$ amounts to using $\theta_0(t)$ as reference
angles.  It is made merely for notational convenience: as we will see, the equilibrium point will be
unique with this assumption (or unique up to reference angles without this assumption).

\section{Per-Node Power Balance}

First we consider the per-node power balance case, modeled by the requirement:
\bq
P^g_j & = &  P^l_j + p_j, \qquad j\in N
\label{eq:balance.node}
\eq
%

\subsection{Control goals}
The control goals are formalized as an optimization problem:
\begin{subequations}
        \bq
      \text{PBO:~}  \min & \!\!\!\!\!\! & \frac{1}{2} \sum_j \alpha_j \left(P^g_j\right)^2 
        + \frac{1}{2} \sum_j \beta_j \left(P^l_j\right)^2
        + \frac{1}{2} \sum_j D_j \omega_j^2
        \nonumber
        \\
        \label{eq:opt.1d}
        \\
        \text{over} & \!\!\!\!\!\! & x := ({\theta}, \omega, P^g, P^l) \text{ and }
        u := (u^g, u^l)
        \nonumber
        \\ 
        \text{s. t.}  
        & \!\!\!\!\!\! & \eqref{eq:OpConstraints.1},  \eqref{eq:balance.node}
        \nonumber
        \\
        & \!\!\!\!\!\! & 
        P^g_j = P^l_j  + p_j +  U_j(\theta, \omega_j), \quad j\in N
        \label{eq:opt.1a}
        \\
        & \!\!\!\!\!\! & P^g_j \ = \ u^g_j, \quad j\in N
        \label{eq:opt.1b}
        \\
        & \!\!\!\!\!\! & P^l_j \ = \ u^l_j, \quad j\in N
        \label{eq:opt.1c}
        \eq
        \label{eq:opt.1}
\end{subequations}
where $\alpha_j>0$, $\beta_j>0$ are constant weights and 
\bqn
U_j(\theta, \omega_j) & :=  & D_j \omega_j  
- \sum\nolimits_{i: i\rightarrow j} B_{ij} \theta_{ij}
+ \sum\nolimits_{k: j\rightarrow k} B_{jk} \theta_{jk}
\eqn
Here we have abused notation and use
$\theta_{ij} := \theta_i - \theta_j$.
In vector form we have
\bq
U(\theta, \omega) & := & D\omega + CBC^T \theta
\label{eq:defE.1}
\eq
where $D := \diag(D_i, i\in N)$, $B := \diag(B_{ij}, (i,j)\in E)$, 
$C$ is the $(n+1)\times m$ incidence matrix.

We comment on the optimization problem \eqref{eq:opt.1}.
\begin{remark}[Control goals]
        \bee
        \item 
        Since the variables are deviations from their nominal values, the
        parameters $(\alpha_j, \beta_j)$ in the objective function \eqref{eq:opt.1d}
        are not electricity costs.  
        Minimizing the objective aims to track 
        generation and consumption that have been scheduled at a slower timescale,
        e.g., to optimize economic efficiency or user utility.
        The parameters $(\alpha_j, \beta_j, D_j)$  weigh the relative costs of 
        deviating from scheduled generation and load, and the nominal frequency. In the next subsection we will show that, for every optimal solution, the corresponding frequency deviation must be zero, provided a feasible solution exists.
        
        \item
        For the definition of \eqref{eq:opt.1}, the regulation capacity limits 
        \eqref{eq:OpConstraints.1} apply only at optimality. As we will see below, our controller, however, enforces \eqref{eq:OpConstraints.1} even during transient.
                
        \item The per-node balance requirement \eqref{eq:balance.node} 
        and the constraint \eqref{eq:opt.1a} imply $U(\theta, \omega) = 0$
        at any feasible $x$.   This will drive the power flow on 
        \emph{every} tie line to its scheduled value, i.e.,
        $P_{ij}^*=0$ in equilibrium (see Theorem \ref{thm:1} below),
        even though this is not included in \eqref{eq:opt.1} as a constraint.
        
        \item The constraints \eqref{eq:opt.1b}\eqref{eq:opt.1c} require that, at optimality,
        the power injection $P^g_j$ and controllable load $P^l_j$ are equal to their
        control commands $u^g_j$ and $u^l_j$ respectively.
        \eee
\end{remark}

In the rest of the paper we make one of the following assumptions
(recall that $(\underline P^g, \underline P^l) < 0 < (\overline P^g, \overline P^l)$
under  A1):
\bi
\item[\textbf{A2}:] The PBO problem \eqref{eq:opt.1} is feasible, i.e.,  
	\bqn
	\underline{P}^g_j - \overline{P}^l_j & \leq \ p_j \ \leq & \overline{P}^g_j - \underline{P}^l_j,
	\qquad \forall j\in N
	\eqn
	Moreover \eqref{eq:opt.1} has a finite optimal solution.
\ei  
Feasibility of \eqref{eq:opt.1} is equivalent to the inequalities in A2 because
the per-node balance constraint \eqref{eq:balance.node} requires 
$p = P^g - P^l$ in equilibrium.
In what follows below, we sometimes strengthen the inequalities in A2 to strict
inequalities.  Strict inequalities mean that each area has a certain power margin. 
If there is no margin
the system may have no feasible solution after a small load disturbance. 
For example, if $\overline{P}^g_j - \underline{P}^l_j = p_j$ for any area $j$, then 
any feasible solution must have ${P}^{g}_j=\overline{P}^g_j$ and 
${P}^{l}_j=\underline{P}^l_j$, i.e., there is no more regulation capacity in area $j$
so that if the load $p_j$ further increases, then frequency will drop and cannot be restored.

\subsection{Decentralized controller}
\label{subsec:controllers}

Our  control laws for $u^g$ and $u^l$ are: for each node $j\in N$,
\begin{subequations}
	\bq
	\dot \lambda_j & = &  \gamma^{\lambda}_j \left( P^g_j(t) -  P^l_j(t) - p_j\right)
	\label{eq:control.1c}
	\\
	u^g_j(t) & = & \left[ 
	P^g_j(t) - \gamma^g_j \left( \alpha_j P^g_j(t) + \omega_j(t) + \lambda_j(t) \right)
	\right]_{\underline P^g_j}^{\overline P^g_j}
	\nonumber \\
	&&+ {\omega_j(t)}/{R_j}
	\label{eq:control.1a}
	\\
	u^l_j(t) & = & \left[ 
	P^l_j(t) - \gamma^l_j \left( \beta_j P^l_j(t) - \omega_j(t) - \lambda_j(t) \right)
	\right]_{\underline P^l_j}^{\overline P^l_j}
	\label{eq:control.1b}
	\eq
	where $\gamma^g_j, \gamma^l_j, \gamma^{\lambda}_j$ are positive constants.
	For any $x_i, a_i, b_i \in \mathbb R$ with $a_i\leq b_i$, 
	$[x_i]_{a_i}^{b_i} := \min \{ b_i, \max \{ a_i, x_i \} \}$. 
	\label{eq:control.1}
\end{subequations}
For vectors $x, a, b$, $[x]_a^b$ is defined accordingly componentwise.

The controller \eqref{eq:control.1} has a simple proportional-integral 
(PI) structure with saturation. 
It is \emph{completely decentralized} where each node $j$ updates its
internal state $\lambda_j(t)$ in \eqref{eq:control.1c} based only on the generation 
$P^g_j(t)$, the controllable load $P^l_j(t)$ and the uncontrolled load $p_j$ 
that are all local at $j$ (within a control area).   
The control inputs $u^g_j(t)$ and $u^l_j(t)$
in \eqref{eq:control.1a} and \eqref{eq:control.1b}
are then static functions of the local state $(P^g_j(t), P^l_j(t), \omega_j(t))$ and the
internal state $\lambda_j(t)$.  Therefore, no communication is required even between nodes.  

We often write $u^g_j$ and $u^l_j$ as functions of 
$(P^g_j, P^l_j, \omega_j, \lambda_j)$:
\begin{subequations}
	\bq
	u^g_j(t) & := & u^g_j \left( P^g_j(t), \omega_j(t), \lambda_j(t) \right)
	\label{eq:control.1a'}
	\\
	u^l_j(t) & := & u^l_j \left( P^l_j(t), \omega_j(t), \lambda_j(t) \right)
	\label{eq:control.1b'}
	\eq
	 for $j\in N$, where these functions are given by the right-hand side of \eqref{eq:control.1a}
	\eqref{eq:control.1b}.
	We now comment on measurements required to implement the control
	\eqref{eq:control.1}. 
	\label{eq:control.1'}
\end{subequations}
\begin{remark}[Implementation]
	The variable $\lambda_j(t)$ in \eqref{eq:control.1c} is a cyber quantity that is
	computed at each node $j$ based on $(P^g_j(t), P^l_j(t), p_j)$ locally at $j$ 
	(within a control area). These quantities can in principle be measured at $j$.  
	We would however like to avoid measuring the uncontrolled load change $p_j$ 
	for ease of implementation.
	To this end let $\Delta P_j(t) := P^g_j(t) - P^l_j(t) - p_j$, $j\in N$, denote the 
	surplus generation at node $j$.   We then have from \eqref{eq:model.1b} and 
	\eqref{eq:defE.1} that $\Delta P_j(t) = M_j \dot\omega_j + U_j(\theta, \omega_j(t))$.
	Since $\dot\lambda_j = \gamma^{\lambda}_j \Delta P_j(t)$, \eqref{eq:control.1c} becomes: 
	\bqn
	\dot\lambda_j & \!\!\!\! = \!\!\!\! & 
	\gamma^{\lambda}_j M_j \dot \omega_j \, + \, \gamma^{\lambda}_j D_j\omega_j(t)   
	\, - \, \gamma^{\lambda}_j \!\! \bigg(
	\sum_{i: i\rightarrow j} \! P_{ij}(t)
	-  \!\! \sum_{k: j\rightarrow k} \! P_{jk}(t)\!\! \bigg)
	\eqn
	where $P_{ij}(t) := B_{ij}(\theta_i(t)-\theta_j(t))$ are the tie-line flows from nodes $i$ to
	$j$ according to the DC power flow model. 
	Hence, to update the internal state $\lambda_j(t)$,
	we only need to measure the local frequency deviation $\omega_j(t)$, its derivative
	$\dot\omega_j(t)$ and the tie-line flows $P_{ij}(t)$ incident on node $j$,
	and not the uncontrolled load $p_j$ in area $j$.  
	An important advantage is that the controller naturally adapts to unknown load 
	changes $p_j$.  This feature will be illustrated in case studies. 
	
	The control inputs $u^g_j(t)$ and $u^l_j(t)$ in \eqref{eq:control.1'}
	can then be implemented 
	using measurements of the local generation $P^g_j(t)$, controlled load 
	$P^l_j(t)$, frequency deviation $\omega_j(t)$ and tie line powers $P_{ij}(t), P_{jk}(t)$. 
\end{remark}

\subsection{Design rationale}
\label{subsec:design}

The controller design \eqref{eq:control.1} is motivated by an approximate primal-dual algorithm for
\eqref{eq:opt.1}.   We first review the form of a standard primal-dual algorithm and
then explain that {the closed-loop dynamics
	\eqref{eq:model.1}\eqref{eq:control.1} carry out an approximate version for 
	\eqref{eq:opt.1} in real time over the closed-loop system.} 
	
\vspace{0.1in}
\noindent
\textbf{Primal-dual algorithms.}
Consider a general  constrained convex optimization:
\bqn
\min_{x\in X} \ \ f(x) & s. t. & g(x) = 0
\eqn
where $f:\mathbb R^n\rightarrow \mathbb R$, $g:\mathbb R^n\rightarrow \mathbb R^k$,
and $X\subseteq \mathbb R^n$ is closed and convex.   
Let $\rho\in\mathbb R^k$ be the Lagrange multiplier associated
with the equality constraint $g(x)=0$.  Define the Lagrangian $L(x; \rho) := f(x) + \rho^T g(x)$.
A standard primal-dual algorithm takes the form:
\begin{subequations}
\bq
x(t+1) & := & \text{Proj}_X \left( x(t) \ - \ \Gamma^x\, \nabla_x L(x(t); \rho(t)) \right)
\label{eq:pma.1a}
\\
\rho(t+1) & := & \rho(t) \ + \ \Gamma^\rho\, \nabla_\rho L(x(t); \rho(t))
\label{eq:pma.1b}
\eq
where Proj$_X(a)$ projects $a\in\mathbb R^n$ to the closest point in $X$
under the Euclidean norm, 
the gain matrices $\Gamma^x, \Gamma^\rho$ are (strictly) positive definite.
\label{eq:pma.1}
\end{subequations}
Hence the iterates $(x(t), \rho(t))$ stays in the set $X\times\mathbb R^k$ for all $t$
and, under appropriate assumptions, converges to a primal-dual optimal point.

In contrast a standard dual algorithm takes the form:
\begin{subequations}
\bq
\rho(t+1) & := & \rho(t) \ + \ \Gamma^\rho\, \nabla_\rho L(x(t); \rho(t))
\label{eq:da.1a}
\\
x(t) & := & \min_{x\in X}\, L(x; \rho(t))
\label{eq:da.1b}
\eq
\label{eq:da.1}
\end{subequations}
As we will see below, almost all primal variables in $x(t)$ are updated 
according to \eqref{eq:pma.1a} except $\omega(t)$ which is updated
according to \eqref{eq:da.1a}.

\vspace{0.1in}
\noindent \textbf{Controller \eqref{eq:control.1} design.}
Let $\lambda$ and $\mu$ be the Lagrange multipliers associated with
constraints \eqref{eq:balance.node} and \eqref{eq:opt.1a} respectively
and let $\rho := (\lambda, \mu)$.
Define the Lagrangian of \eqref{eq:opt.1} as:
\begin{align}
	L_1(x; \rho) & =   
	\frac{1}{2} \sum\nolimits_j \alpha_j \left(P^g_j\right)^2 + \frac{1}{2} \sum\nolimits_j \beta_j \left(P^l_j\right)^2
	+ \frac{1}{2} \sum\nolimits_j D_j \omega_j^2
	\nonumber 
	\\ 
	&   + \sum\nolimits_j \lambda_j \! \left( P^g_j - P^l_j - p_j \right)
	\nonumber \\
	&  
	+ \sum\nolimits_j \mu_j \bigg(
	P^g_j - P^l_j  - p_j -  D_j \omega_j  
	\nonumber \\
	&  \qquad\quad 
	+ \sum\nolimits_{i: i\rightarrow j}  B_{ij} \theta_{ij} - \! \sum\nolimits_{k: j\rightarrow k} \! B_{jk}\theta_{jk}  \!\bigg)
	\label{eq:defL.1}
\end{align}
The Lagrangian is defined to be only a function of $(x, \rho)$ and independent
of $u := (u^g, u^l)$ as we treat $u$ as a function of $(x, \rho)$
defined by the right-hand side of \eqref{eq:control.1a}\eqref{eq:control.1b}.
The set $X$ in \eqref{eq:pma.1a} is defined by the constraints 
\eqref{eq:OpConstraints.1}:
\bq
\!\!\!\!\!
X & \!\!\!\!\ := \!\!\!\! & \left\{(P^g, P^l) : 
                (\underline{P}^g, \underline{P}^l)  \ \leq \ (P^g, P^l) \ \leq \
                (\overline{P}^g, \overline{P}^l)  \right\}
\label{eq:defX}
\eq

We now explain how the closed-loop system  
\eqref{eq:model.1}\eqref{eq:control.1}  implements an approximate primal-dual
algorithm for solving \eqref{eq:opt.1} in real time.
We first show that the control \eqref{eq:control.1c} and the swing dynamic 
\eqref{eq:model.1b} implement the dual update \eqref{eq:pma.1b} on dual variables
$\rho = (\lambda(t), \mu(t))$.
We then show that \eqref{eq:model.1a}\eqref{eq:model.1c}\eqref{eq:model.1d} 
implement a mix of the primal updates \eqref{eq:pma.1a} and \eqref{eq:da.1b}
on the primal variables $x = (\theta(t), \omega(t), P^g(t), P^l(t))$.

First the variable $\lambda$ is the Lagrange multiplier vector for the per-node
power balance constraint \eqref{eq:balance.node}.  The control law
\eqref{eq:control.1c} implements part of the dual update \eqref{eq:pma.1b} in 
continuous time:
\begin{subequations}
	\bq
	\dot \lambda & = & \Gamma^{\lambda} \, \nabla_\lambda L_1 (x(t), \rho(t))
	\label{eq:dual.1a}
	\eq
	where $\Gamma^{\lambda} := \diag(\gamma^{\lambda}_j, j\in N)$.
	
	The variable $\mu$ is the Lagrange multiplier vector for the constraint
	\eqref{eq:opt.1a}.   	
	 It can be identified with the frequency deviation 
	$\omega$ as the KKT condition \cite{Bertsekas:Nonlinear}
	\bqn
	\frac{\partial L_1}{\partial \omega_j}(x^*, \rho^*) & = & 
	D_j ( \omega^*_j - \mu^*_j) \ \ = \ \ 0
	\eqn
	implies $\mu^*_j = \omega^*_j$ at optimality since $D_j>0$.   Moreover we can
	identify $\mu(t) \equiv \omega(t)$ during transient if we update the cyber quantity $\mu(t)$
	according to
	\bq
	\dot\mu & = & M^{-1} \left( P^g(t) - P^l(t)  - p_j(t) - U(\theta(t), \omega(t)) \right)
	\nonumber \\
	& = & M^{-1}\ \nabla_\mu L_1 (x(t), \rho(t))
	\label{eq:dual.1b}
	\eq
\label{eq:dual.1}
\end{subequations}
	where $M := \diag(M_j, j\in N)$.  Then $\mu$ and $\omega$ have the
	same dynamics (compare with \eqref{eq:model.1b}) and hence 
	$\mu(t) \equiv \omega(t)$ as long as $\mu(0) = \omega(0)$.
	Therefore the swing dynamic \eqref{eq:model.1b} is equivalent to \eqref{eq:dual.1b}
	and carries out the dual update \eqref{eq:pma.1b} on $\mu$ when we
	take $\mu(t)\equiv \omega(t)$.
	
	Second, to see how  \eqref{eq:model.1a}\eqref{eq:model.1c}\eqref{eq:model.1d} 
	implement the primal updates, note that the last
	term in the definition \eqref{eq:defL.1} of the Lagragian $L_1$ is:
	\bqn
	& &  \sum\nolimits_j \mu_j \left( 
	\sum\nolimits_{i: i\rightarrow j}  B_{ij} \theta_{ij} - \! \sum\nolimits_{k: j\rightarrow k} \! B_{jk}\theta_{jk}  \!\right)
	\\
	& = & 
	- \sum\nolimits_{(i,j)\in E} B_{ij} \left(\mu_i - \mu_j\right) \left(\theta_i - \theta_j\right)
	\ \ = \ \ - \mu^T C B C^T \theta
	\eqn
	We fix $\theta_0 := 0$ to be a reference angle.   Then there is a bijection
	between $\theta$ and $\tilde\theta$ that is in the column space of $C^T$, given by
	$\tilde\theta = C^T\theta$.  Hence we can work with either variable.  
	For stability proof we use $\tilde\theta$.	
	In vector form
	\bqn
	L_1 & = & \frac{1}{2} \left( (P^g)^T A^g P^g + (P^l)^T A^l P^l + \omega^T D \omega \right)
	\\
	& & + \ \lambda^T \!\! \left( P^g - P^l - p \right) + \ \mu^T \!\! \left( P^g - P^l - p -D\omega - C B\tilde \theta \right)
	\eqn
	where  $A^g := \diag(\alpha_j, j\in N)$, $A^l := \diag(\beta_j, j\in N)$, $B:=\diag(B_{ij}, (i,j)\in E)$ and
	\bqn
	\nabla_{\tilde\theta} L_1 & = & - BC^T \mu  \ \ = \ \ - BC^T \omega
	\eqn
Since
$\dot{\tilde\theta} = C^T\dot\theta = C^T\omega$, we have
\begin{subequations}
	\bq
	   \dot {\tilde \theta} & = & - B^{-1}\nabla_{\tilde\theta} L_1
	\label{eq:primal.1c}
	\eq
	i.e., \eqref{eq:model.1a} implements the primal update \eqref{eq:pma.1a} on
	$\tilde\theta$.

	Identification of $\omega(t)$ with $\mu(t)$ means that, given the dual variable
	$\rho(t)$, we update $\omega(t)$ as in the dual algorithm \eqref{eq:da.1b}:
	\bq
	\label{eq:minomega}
	\omega(t) & = & \mu(t) \ \ = \ \ \arg\min_{\omega}\, \nabla_\omega\,  L_1(x, \rho(t))
	\eq
	instead of \eqref{eq:pma.1a}.
	Moreover we have 	
	\bqn
	\nabla_{P^g} L_1 (x(t), \rho(t)) & = & A^g P^g(t) + \omega(t) + \lambda(t)
	\eqn
	Therefore the control law \eqref{eq:control.1a} is equivalent to
	\bqn
	u^g(t)  & = & \left[ P^g(t) - \, \Gamma^g\, \nabla_{P^g} L_1 (x(t), \rho(t)) 
	\right]_{\underline P^g}^{\overline P^g} \ + \ R^{-1} \omega(t)
	\eqn
	where $\Gamma^g := \diag(\gamma^g_j, j\in N)$ and 
	$R := \diag(R_j, j\in N)$.	
	Then the generation dynamic \eqref{eq:model.1c} becomes
	\begin{align}
	T^g\dot P^g  =  \left[ P^g(t) - \, \Gamma^g\, \nabla_{P^g} L_1 (x(t), \rho(t))
	\right]_{\underline P^g}^{\overline P^g} \ - \ P^g(t)
	\label{eq:primal.1a}
	\end{align}
	where $T^g:=\diag(T^g_j, j\in N )$.
	Similarly the control law \eqref{eq:control.1b} is equivalent to
	\bqn
	u^l(t)  & = & \left[ P^l(t) - \, \Gamma^l\, \nabla_{P^l} L_1 (x(t), \rho(t)) 
	\right]_{\underline P^l}^{\overline P^l}
	\eqn
	where $\Gamma^l := \diag(\gamma^l_j, j\in N)$.  	
	The controllable load dynamic \eqref{eq:model.1d} is equivalent to
	\begin{align}
	T^l\dot P^l & =  \left[ P^l(t) - \, \Gamma^l\, \nabla_{P^l} L_1 (x(t), \rho(t))
	\right]_{\underline P^l}^{\overline P^l} \ - \ P^l(t)
	\label{eq:primal.1b}
	\end{align}
	where $T^l:=\diag(T^l_j, j\in N )$.
	
	Writing $P:=(P^g, P^l)$, $T^{gl} = \text{diag}(T^g, T^l)$ and $\Gamma^{gl} = \text{diag}(\Gamma^g, \Gamma^l)$, 
	the dynamics \eqref{eq:primal.1a}--\eqref{eq:primal.1b} becomes
	\bq
	\!\!\!\!\!\!
	T^{gl}\dot P & \!\!\!\!\! = \!\!\!\!\!  & \text{Proj}_X\!\! \left( P(t) - \, \Gamma^{gl}\, \nabla_{P} L_1 (x(t), \rho(t)) \right)
			\, - \, P(t)
	\label{eq:primal.1ab}
	\eq
	where $X$ is defined in \eqref{eq:defX}.
	Informally \eqref{eq:primal.1ab} can be interpreted as a continuous-time 
	version of the primal update \eqref{eq:pma.1a} since the right-hand side can
	be interpreted as $P(t+1) - P(t)$ in the discrete-time version \eqref{eq:pma.1a}.
	While it is clear from \eqref{eq:pma.1a} that $P(t)$ in the discrete-time formulation
	stays in $X$ for all $t$, it may not be obvious that $P(t)$ in the continuous-time 
	formulation \eqref{eq:primal.1ab} stays in $X$ for all $t$.  This is
	proved formally in Lemma \ref{lemma:P(t)} below.
\label{eq:primal.1}
\end{subequations}

	In summary the closed-loop system \eqref{eq:model.1}\eqref{eq:control.1} 
	carries out an approximate primal-dual algorithm \eqref{eq:pma.1} in
	continuous time.  The dual updates \eqref{eq:dual.1a} and \eqref{eq:dual.1b}
	on $(\lambda(t), \mu(t))$
	are implemented by \eqref{eq:control.1c} and \eqref{eq:model.1b} respectively.
	The primal updates \eqref{eq:primal.1c} and \eqref{eq:primal.1ab} on
	$(\theta(t), P^g(t), P^l(t))$ are
	implemented by \eqref{eq:model.1a} and \eqref{eq:model.1c} \eqref{eq:model.1d}
	respectively.
	We refer to this as an \emph{approximate} primal-dual algorithm because
	the identification of $\omega(t)\equiv \mu(t)$ implements the update
	\eqref{eq:da.1b} on $\omega(t)$ instead of \eqref{eq:pma.1a}.

\subsection{Optimality and uniqueness of equilibrium point}
\label{subsec:optimality.1}

In this subsection, we address the optimality of the equilibrium point of the closed-loop system \eqref{eq:model.1}\eqref{eq:control.1}. Given an $(x, \rho) := ( \theta, \omega, P^{g}, P^{l},$ $\lambda, \mu)$, recall that the control input $u(x, \rho)$ is given by \eqref{eq:control.1'}.
\begin{definition}
	\label{def:ep.1}
	A point $(x^*, \rho^*) := ( \theta^*, \omega^*, P^{g*}, P^{l*},$ $\lambda^*, \mu^*)$
	is an \emph{equilibrium point} or an \emph{equilibrium} of the closed-loop system 
	\eqref{eq:model.1}\eqref{eq:control.1} if 
	\bee
	\item The right-hand side of \eqref{eq:model.1} vanishes at $x^*$ and $u(x^*, \rho^*)$.  
	\item The right-hand side of \eqref{eq:control.1c} vanishes at $(x^*, \rho^*)$.
	\eee
\end{definition}

\begin{definition}
	A point $(x^*, \rho^*)$ is \emph{primal-dual optimal} if $(x^*, u(x^*, \rho^*))$ is optimal 
	for \eqref{eq:opt.1} and $\rho^*$ is optimal for its dual problem.
\end{definition}

Section \ref{subsec:design} says that the closed-loop system \eqref{eq:model.1}\eqref{eq:control.1}
carries out an (approximate)
 primal-dual algorithm in real time to solve \eqref{eq:opt.1}.
In this subsection we prove that a point $(x^*, \rho^*)$ is an equilibrium of the closed-loop system if 
and only if it is primal-dual optimal.   Moreover the equilibrium is unique.
In the next subsection we prove that the closed-loop system converges to the equilibrium point
starting from any initial point that satisfies constraint \eqref{eq:OpConstraints.1}. 

\begin{theorem}
	\label{thm:2}
	Suppose assumption A2 hold.   A point  $(x^*, \rho^*)$ is primal-dual optimal if
	and only if  $(x^*, \rho^*)$ is an equilibrium of the closed-loop 
	system \eqref{eq:model.1}\eqref{eq:control.1} that satisfies 
	\eqref{eq:OpConstraints.1} and $\mu^*=0$.
\end{theorem}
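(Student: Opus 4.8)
The plan is to reduce the claimed biconditional to the statement that the equilibrium conditions of Definition~\ref{def:ep.1}, together with \eqref{eq:OpConstraints.1} and $\mu^*=0$, are exactly the Karush--Kuhn--Tucker (KKT) conditions of the PBO problem \eqref{eq:opt.1}. Since the objective \eqref{eq:opt.1d} is convex (a nonnegative quadratic form because $\alpha_j,\beta_j,D_j>0$) and all constraints \eqref{eq:OpConstraints.1}, \eqref{eq:balance.node}, \eqref{eq:opt.1a}--\eqref{eq:opt.1c} are affine, feasibility under A2 is the only constraint qualification needed, so KKT points coincide with primal--dual optimal pairs \cite{Bertsekas:Nonlinear}. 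The variables $u^g,u^l$ enter only through \eqref{eq:opt.1b}\eqref{eq:opt.1c} and the objective is independent of them, so I first eliminate them by $u^{g*}=P^{g*}$, $u^{l*}=P^{l*}$; subtracting \eqref{eq:balance.node} from \eqref{eq:opt.1a} shows every feasible point has $U(\theta,\omega)=0$. I then use the Lagrangian $L_1$ of \eqref{eq:defL.1} with multipliers $\rho=(\lambda,\mu)$, treating the box constraints \eqref{eq:OpConstraints.1} through the projection onto $X$ in \eqref{eq:defX}.

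Next I translate the equilibrium conditions one by one. Vanishing of \eqref{eq:model.1a} gives $\omega^*=0$; vanishing of \eqref{eq:control.1c} gives the per-node balance \eqref{eq:balance.node}. Feeding $\omega^*=0$ and \eqref{eq:balance.node} into the vanishing of \eqref{eq:model.1b} leaves $CBC^T\theta^*=0$; because $G$ is connected, $\mathrm{null}(CBC^T)=\mathrm{span}(\mathbf 1)$, and the reference condition A1.2 ($\theta_0^*=0$) then forces $\theta^*=0$, hence $U(\theta^*,\omega^*)=0$ by \eqref{eq:defE.1} and \eqref{eq:opt.1a} holds (this is the $P_{ij}^*=0$ statement of Theorem~\ref{thm:1}). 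Vanishing of \eqref{eq:model.1c}\eqref{eq:model.1d} yields $u^{g*}=P^{g*}$ and $u^{l*}=P^{l*}$ once the $\omega^*/R_j$ droop terms of \eqref{eq:control.1a} and \eqref{eq:model.1c} cancel; substituting \eqref{eq:control.1a}\eqref{eq:control.1b} and using $\omega^*=0$ inside the saturation then produces the fixed-point relations $P_j^{g*}=[\,P_j^{g*}-\gamma_j^g(\alpha_j P_j^{g*}+\lambda_j^*)\,]_{\underline P_j^g}^{\overline P_j^g}$ and $P_j^{l*}=[\,P_j^{l*}-\gamma_j^l(\beta_j P_j^{l*}-\lambda_j^*)\,]_{\underline P_j^l}^{\overline P_j^l}$. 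Each right-hand side lies in the box, so \eqref{eq:OpConstraints.1} holds automatically at any equilibrium.

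The heart of the argument is to recognize these two fixed-point relations as the stationarity-plus-complementary-slackness conditions of PBO in $P^g$ and $P^l$. For $\gamma>0$ and the clipping $[\cdot]_a^b$, the scalar identity $P=[P-\gamma g]_a^b$ is equivalent to the variational inequality characterizing a box-constrained minimizer whose gradient at $P$ is $g$: it forces $g=0$ on the interior, $g\le 0$ at the upper bound, and $g\ge 0$ at the lower bound. Here $g$ equals $\nabla_{P_j^g}L_1=\alpha_j P_j^g+\lambda_j+\mu_j$ and $\nabla_{P_j^l}L_1=\beta_j P_j^l-\lambda_j-\mu_j$ evaluated with $\mu^*=\omega^*=0$, so the relations are precisely the KKT conditions for $P^{g*},P^{l*}$. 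It remains to match stationarity in $\theta$ and $\omega$: from \eqref{eq:defL.1}, $\nabla_{\omega}L_1=D(\omega-\mu)$ and $\nabla_\theta L_1=-CBC^T\mu$, both of which vanish at $\omega^*=\mu^*=0$. Conversely, given a KKT point, $\nabla_\omega L_1=0$ gives $\mu^*=\omega^*$ while $\nabla_\theta L_1=0$ gives $CBC^T\mu^*=0$, i.e. $\mu^*=c\mathbf 1$; substituting $\omega^*=\mu^*=c\mathbf 1$ into the feasibility relation $D\omega^*+CBC^T\theta^*=0$ and left-multiplying by $\mathbf 1^T$ (using $\mathbf 1^TC=0$ and $\sum_jD_j>0$) forces $c=0$, so $\omega^*=\mu^*=0$. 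This supplies the $\mu^*=0$ normalization and closes the equivalence in both directions.

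The step I expect to be most delicate is the equivalence between the saturated fixed-point equations and the KKT stationarity/complementary-slackness conditions for the box constraints: one must verify that the three active/inactive cases of $[\cdot]_a^b$ reproduce exactly the correct sign of the box multipliers, and that the gains $\gamma_j^g,\gamma_j^l>0$ play no role beyond guaranteeing the equivalence. Two smaller points also require care: that the $\omega^*/R_j$ droop term in \eqref{eq:control.1a} cancels against the one in \eqref{eq:model.1c} at equilibrium while the residual $\omega^*$ inside the saturation vanishes because $\omega^*=0$; and that the passage from $CBC^T\theta^*=0$ to $\theta^*=0$ genuinely uses both connectedness of $G$ and the reference assumption A1.2 --- without the latter $\theta^*$ is pinned down only up to an additive constant, which is precisely the uniqueness caveat noted after A1.
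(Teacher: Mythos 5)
Your proposal is correct and its architecture is the same as the paper's: both reduce the theorem to the observation that the equilibrium conditions with $\mu^*=0$ coincide with primal feasibility plus the KKT conditions of \eqref{eq:opt.1}, and both hinge on the same key technical fact (the paper's Lemma~\ref{lemma:2}) that the saturated fixed-point relations $P=[P-\gamma\,\nabla_P L_1]_a^b$ with $\gamma>0$ are exactly the box-constrained stationarity/complementary-slackness conditions for $P^{g},P^{l}$. The one place where your route genuinely differs is how $\omega^*=\mu^*=0$ is obtained in the optimality-to-equilibrium direction: the paper proves $\omega^*=0$ by a primal perturbation argument (Lemma~\ref{lemma:1}: replace $(\theta^*,\omega^*)$ by $(0,0)$, note feasibility is preserved because $U=0$, and get a strictly cheaper point), whereas you extract it purely from the first-order conditions --- $\nabla_\omega L_1=0$ gives $\mu^*=\omega^*$, $\nabla_{\theta}L_1=0$ gives $CBC^T\mu^*=0$ hence $\mu^*=c\mathbf{1}$ by connectedness, and left-multiplying the feasibility identity $D\omega^*+CBC^T\theta^*=0$ by $\mathbf{1}^T$ (using $\mathbf{1}^TC=0$ and $\sum_j D_j>0$) forces $c=0$. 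Both arguments are valid; yours keeps everything inside the KKT system and avoids constructing an auxiliary feasible point, while the paper's Lemma~\ref{lemma:1} is reused independently for parts 3 and 4 of Theorem~\ref{thm:1}. Your remaining steps (droop-term cancellation, $CBC^T\theta^*=0\Rightarrow\theta^*=\theta_0^*\mathbf{1}$ via connectedness plus A1.2, and the observation that the clipping makes \eqref{eq:OpConstraints.1} automatic at any equilibrium) all match the paper's reasoning.
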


Theorem \ref{thm:2} shows the equivalence between
the equilibrium of closed-loop system and the primal-dual optimal solution. 
It also implies that, in equilibrium, per-node power 
balance \eqref{eq:balance.node} is achieved and 
constraints \eqref{eq:OpConstraints.1} are satisfied.   
The next theorem shows that the equilibrium point is almost unique
and has a simple and intuitive structure. 
\begin{theorem}
	\label{thm:1}
	Suppose assumption A1 and A2 hold.  Let $(x^*, \rho^*)$ be primal-dual optimal. 
	Then
	\begin{enumerate}
		\item $x^*$ and $\mu^*$ are unique, with $\theta^*$ being unique
		up to an (equilibrium) reference angle $\theta_0^*$.   
		\item $\lambda^*$ is also unique if strict inequalities hold in A2.  In that case,
		$\lambda^*_j$ equals the (negative of the) marginal generation/load regulation cost at
		node $j$, i.e., $\alpha_j P^{g*}_j = -\beta_j P^{l*}_j = -\lambda_j^*$.
			
		\item nominal frequencies are restored, i.e., $\omega^*_j=0$ for all $j\in N$;
		moreover $\theta^*_j=\theta^*_0$ for all $j\in N$.
		\item the power flow $P^*_{ij} := B_{ij}(\theta^*_i - \theta^*_j) = 0$ on every line $(i,j)\in E$.
	\end{enumerate}
\end{theorem}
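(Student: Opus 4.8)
The plan is to characterize primal-dual optimality of PBO \eqref{eq:opt.1} through its KKT conditions, which are both necessary and sufficient here since the objective \eqref{eq:opt.1d} is convex (a sum of positive-definite quadratics), the equality constraints \eqref{eq:balance.node}\eqref{eq:opt.1a} are affine, and the capacity set \eqref{eq:defX} is a box. The structural fact that drives everything is that the feasible set \emph{decouples}: subtracting \eqref{eq:balance.node} from \eqref{eq:opt.1a} yields $U(\theta,\omega)=0$, so the conditions on $(P^g,P^l)$ (namely $P^g_j-P^l_j=p_j$ together with \eqref{eq:OpConstraints.1}) are completely independent of the condition $U(\theta,\omega)=D\omega+CBC^T\theta=0$ on $(\theta,\omega)$. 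Since the objective is likewise separable into a $(P^g,P^l)$-part and the term $\tfrac12\omega^TD\omega$, the problem splits into two independent minimizations that I would treat in turn.

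First I would dispatch parts 3 and 4 via the $(\theta,\omega)$-subproblem: minimize $\tfrac12\omega^TD\omega$ subject to $D\omega+CBC^T\theta=0$. The point $\omega=0,\ \theta=0$ is feasible (it gives $U=0$), and since $D\succ0$ the objective is nonnegative and vanishes only at $\omega=0$; hence every optimizer has $\omega^*=0$. Feeding $\omega^*=0$ back into $U=0$ gives $CBC^T\theta^*=0$, and left-multiplying by $(\theta^*)^T$ with $B\succ0$ forces $C^T\theta^*=0$. Because $G$ is connected, $\ker C^T$ is one-dimensional and spanned by $\mathbf 1$, so $\theta^*_j=\theta^*_0$ for all $j$ (with A1 this is $\theta^*\equiv0$). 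This yields part 3, the uniqueness of $\theta^*$ up to the reference angle, and immediately part 4 since $P^*_{ij}=B_{ij}(\theta^*_i-\theta^*_j)=0$.

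For the rest of part 1 I would treat the $(P^g,P^l)$-subproblem: minimize $\tfrac12\sum_j\alpha_j(P^g_j)^2+\tfrac12\sum_j\beta_j(P^l_j)^2$ over the convex set cut out by $P^g_j-P^l_j=p_j$ and \eqref{eq:OpConstraints.1}. Eliminating $P^g_j=P^l_j+p_j$ turns each node's cost into the strictly convex function $\tfrac12\alpha_j(P^l_j+p_j)^2+\tfrac12\beta_j(P^l_j)^2$ of the single variable $P^l_j$ over a (nonempty) interval, so $(P^{l*}_j,P^{g*}_j)$ is unique; together with $\omega^*,\theta^*$ this gives uniqueness of $x^*$. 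Uniqueness of $\mu^*$ follows from the stationarity condition $\partial L_1/\partial\omega_j=D_j(\omega^*_j-\mu^*_j)=0$ already observed in Section \ref{subsec:design}, which with $\omega^*=0$ forces $\mu^*=0$.

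The hard part is part 2. Writing the stationarity of $L_1$ in \eqref{eq:defL.1} with respect to $P^g_j$ and $P^l_j$ and using $\mu^*=0$ gives $\alpha_jP^{g*}_j+\lambda^*_j+(\text{box multipliers})=0$ and $\beta_jP^{l*}_j-\lambda^*_j+(\text{box multipliers})=0$. When the strict inequalities in A2 hold, the per-node feasible interval for $P^l_j$ has nonempty interior, which serves as a constraint qualification; moreover two \emph{opposing} capacity limits cannot be simultaneously active, since that would force $p_j=\overline P^g_j-\underline P^l_j$ or $p_j=\underline P^g_j-\overline P^l_j$, both excluded by strict A2. Hence at most one component sits on a capacity limit, and the stationarity equation of the \emph{interior} component pins down $\lambda^*_j$ uniquely. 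In the fully interior case both box multipliers vanish and the two equations collapse to $\alpha_jP^{g*}_j=-\lambda^*_j$ and $\beta_jP^{l*}_j=\lambda^*_j$, i.e.\ $\alpha_jP^{g*}_j=-\beta_jP^{l*}_j=-\lambda^*_j$, identifying $\lambda^*_j$ with the negated marginal regulation cost. The delicate point to get right is exactly this KKT bookkeeping with the inequality constraints: verifying that strict A2 rules out the degenerate active-set configurations that would otherwise leave $\lambda^*_j$ undetermined.
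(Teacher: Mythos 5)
Your overall route---KKT conditions plus the observation that the feasible set decouples into a $(\theta,\omega)$ part governed by $U(\theta,\omega)=0$ and a $(P^g,P^l)$ part governed by per-node balance and the box constraints---matches the paper's proof in substance; parts 1, 3 and 4 and the identification $\mu^*=\omega^*=0$ are all fine. The gap is in part 2, at the step ``Hence at most one component sits on a capacity limit.'' Ruling out the two \emph{opposing} active-set configurations ($P^{g*}_j=\overline P^g_j$ with $P^{l*}_j=\underline P^l_j$, and $P^{g*}_j=\underline P^g_j$ with $P^{l*}_j=\overline P^l_j$) via strict A2 is correct, but it does not exclude the two \emph{aligned} configurations: both variables at their lower limits, or both at their upper limits. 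These are perfectly consistent with $P^{g*}_j-P^{l*}_j=p_j$ and with strict A2 (for instance $p_j=\underline P^g_j-\underline P^l_j$ lies strictly between $\underline P^g_j-\overline P^l_j$ and $\overline P^g_j-\underline P^l_j$), and if either occurred, both box multipliers could be active and $\lambda^*_j$ would only be confined to an interval rather than determined.

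The paper closes exactly this hole with an improvement argument that uses A1: in the both-at-lower-limits case, $\underline P^g_j<0$ and $\underline P^l_j<0$, so replacing $(P^{g*}_j,P^{l*}_j)$ by $(\underline P^g_j+\epsilon_j,\,\underline P^l_j+\epsilon_j)$ for small $\epsilon_j>0$ preserves $P^g_j-P^l_j=p_j$ and feasibility while moving both quadratics toward zero, strictly lowering the objective---a contradiction; the both-at-upper-limits case is symmetric. Equivalently, in your reduced one-variable formulation the optimality condition at the left endpoint would require $\alpha_j\underline P^g_j+\beta_j\underline P^l_j\ge 0$, which is impossible under A1. You need to add this A1-based step to part 2; everything else in your argument survives.
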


The proofs of Theorem \ref{thm:2} and \ref{thm:1} are given in Appendix 
\ref{sec:ProofThms1and2}.

\subsection{Asymptotic stability}
Before proving the stability, we assume
\bi
\item[\textbf{A3:}] The initial state of the closed-loop system  
	\eqref{eq:model.1}\eqref{eq:control.1} is finite, and $(P^g_j(0), P^l_j(0))$ 
	satisfy constraint \eqref{eq:OpConstraints.1}. 
\ei

Motivated by \eqref{eq:primal.1ab} we will write the closed-loop system 
\eqref{eq:model.1}\eqref{eq:control.1} in a similar form that will turn out to be
critical for our stability analysis.   To do this we first prove the following boundedness property
of $(P^g(t), P^l(t))$ in Appendix \ref{proof:thmstability2}.
\begin{lemma}
\label{lemma:P(t)}
Suppose assumptions A1 and A3 hold.  Then constraint \eqref{eq:OpConstraints.1} is 
satisfied for all $t\geq 0$, i.e. $(P^g(t), P^l(t))\in X$ for all $t\geq 0$ where
$X$ is defined in \eqref{eq:defX}.
\end{lemma}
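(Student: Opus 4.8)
The plan is to exploit the structure of the projected dynamics \eqref{eq:primal.1ab} together with the fact that $X$ in \eqref{eq:defX} is a Cartesian product of intervals, so that both the dynamics and the invariance claim decouple componentwise. I would first write $P = (P^g, P^l)$ and denote a generic scalar component by $P_i$, with its own positive time constant $T_i$ and bounds $[\underline{P}_i, \overline{P}_i]$ (nondegenerate by A1). For that component, \eqref{eq:primal.1ab} reads $T_i \dot P_i = [\,\cdot\,]_{\underline{P}_i}^{\overline{P}_i} - P_i$, where the bracketed quantity is the $i$-th entry of $\text{Proj}_X(P - \Gamma^{gl}\nabla_P L_1)$. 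The crucial observation is that this bracketed quantity always lies in $[\underline{P}_i, \overline{P}_i]$ by the very definition of the projection onto a box, \emph{irrespective} of the value of $\nabla_P L_1$ along the trajectory. Hence the coupling through $\nabla_P L_1$ is immaterial, and I may treat the bracket as an unknown but uniformly bounded forcing term.

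From this I would extract two one-sided differential inequalities. Since the bracketed term is $\leq \overline{P}_i$, we get $T_i \dot P_i \leq \overline{P}_i - P_i$; since it is $\geq \underline{P}_i$, we get $T_i \dot P_i \geq \underline{P}_i - P_i$. Setting $e_i := P_i - \overline{P}_i$, the first inequality becomes $\dot e_i \leq -e_i/T_i$, and multiplying by the integrating factor $e^{t/T_i}$ gives $\frac{d}{dt}\!\left(e^{t/T_i}\, e_i\right) \leq 0$, whence $e_i(t) \leq e_i(0)\, e^{-t/T_i}$. By A3 the initial point satisfies \eqref{eq:OpConstraints.1}, so $e_i(0) \leq 0$ and therefore $e_i(t) \leq 0$, i.e. $P_i(t) \leq \overline{P}_i$ for all $t \geq 0$. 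The symmetric computation with $w_i := \underline{P}_i - P_i$ and the second inequality yields $P_i(t) \geq \underline{P}_i$. Ranging over all components then gives $(P^g(t), P^l(t)) \in X$ for all $t \geq 0$, which is the claim.

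The only point requiring care is the nonsmoothness introduced by the saturation in \eqref{eq:control.1}: the closed-loop vector field is piecewise smooth rather than $C^1$. I would dispatch this by noting that the clamp $[\cdot]_{\underline P}^{\overline P}$ (equivalently $\text{Proj}_X$) is globally Lipschitz, so the closed-loop field is Lipschitz and continuous; this guarantees existence and uniqueness of an absolutely continuous trajectory, and continuity of the field makes each $P_i(\cdot)$ in fact $C^1$, so the integrating-factor step holds pointwise in $t$ (and global existence for $t \geq 0$ follows since the $P$-components stay bounded while the remaining states grow at most linearly). I do not expect a genuine obstacle here — the argument is essentially the statement that the vector field points into, or is tangent to, $X$ on its boundary $\partial X$ — but getting the signs of the one-sided inequalities right and justifying the comparison step in the presence of the projection is the one place a reader could slip, so that is where I would concentrate the rigor.
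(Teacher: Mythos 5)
Your proof is correct and rests on the same key observation as the paper's: the saturated input $\hat u$ always lies in the box $X$ by construction of the clamp, and each component of $P$ is a first-order lag of it, so $P$ cannot escape the box. The paper executes this by solving the first-order ODE explicitly (Laplace transform and convolution) and bounding the integrand by $\underline P_j^g$ and $\overline P_j^g$, whereas you use the equivalent one-sided differential inequality with an integrating factor; your version handles a nonzero initial condition via A3 a bit more explicitly than the paper's convolution formula, but the argument is essentially the same.
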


We set the control gains for $(\hat u^g, \hat u^l)$ in \eqref{eq:control.1} as 
	$ \gamma^g_j \ = \ ( T^g_j )^{-1}, 
	\gamma^l_j \ = \ ( T^l_j )^{-1}
	$
Identifying $\mu(t)\equiv \omega(t)$, the closed-loop system
\eqref{eq:model.1}\eqref{eq:control.1} is (in vector form): 
\begin{subequations}
	\bq
	\dot {\tilde\theta}(t)&=&    C^T\omega(t) \label{eq:model.2a'}\\
	\dot \omega (t)&=&M^{-1}\left(P^g(t)-P^l(t)-p(t)-D\omega(t)-CB\tilde\theta(t) \right ) 
	\nonumber \label{eq:model.2b2'}\\
	\\
	\dot{P}^g(t)&=&(T^g)^{-1}\left ( -P^g(t)+\hat u^g(t) \right)\\ 
	\dot{P}^l(t)&=&(T^l)^{-1}\left ( -P^l(t)+\hat u^l(t) \right)\\
	\dot{\lambda}(t)&=&\Gamma^{\lambda} \left (P^g(t)-P^l(t)-p \right)
	\eq
Here
\label{eq:model.2}
\end{subequations}
\bqn
\hat u^g(t) & = & \left[ P^g(t) - (T^g)^{-1} \left( 
		A^g P^g(t) + \omega(t) + \lambda(t) \right)\right]_{\underline P^g}^{\overline P^g}
\\
\hat u^l(t) & = & \left[ P^l(t) - (T^l)^{-1} \left( 
		A^l P^l(t) - \omega(t) - \lambda(t) \right)\right]_{\underline P^l}^{\overline P^l}
\eqn
Denote $w:=(\tilde\theta, \omega, P^g, P^l, \lambda)$ and define
\bq
F(w)& \!\!\!\!\! := \!\!\!\!\! &\left [ 
\begin{array}{l}
	- B^{{1}/{2}} C^T \omega \\
	- M^{-1/2} \left(P^g-P^l-p-D\omega-CB\tilde\theta \right )\\ 
	(T^g)^{-1} \left(A^g P^g+\omega+\lambda \right)\\
	(T^l)^{-1} \left( A^l P^l-\omega-\lambda \right)\\ 
	- (\Gamma^\lambda)^{{1}/{2}} \left (P^g-P^l -p \right)
\end{array}  \right ] 
\label{eq:Fz.1}
\eq
We further define
\bqn
S  & := & \mathbb R^{m+n+1}\times X \times \mathbb R^n
\eqn
where the closed convex set $X$ is defined in \eqref{eq:defX}.  
For any $w$ denote the projection of $w-F(w)$ onto $S$ to be
\begin{align}
H(w) :=  \text{Proj}_S (w-F(w)) := \ \arg\min_{y\in S} \| y - (w-F(w)) \|_2 \nonumber
\end{align}
where $\|\cdot\|_2$ is the Euclidean norm.
Then the closed-loop system \eqref{eq:model.2} can be written as
\bq
\dot w(t) & = & \Gamma_1 \left( H(w(t)) \ - \ w(t) \right)
\label{eq:model.3}
\eq
where the positive definite gain matrix is:
\bqn
\Gamma _1& := & \text{diag}\left( 
		B^{-1/2}, M^{-1/2}, (T^g)^{-1}, (T^l)^{-1}, (\Gamma^\lambda)^{1/2} \right)
\eqn
Note that the projection operation $H$ has an effect only on 
$(\dot P^g, \dot P^l)$.   Lemma \ref{lemma:P(t)} implies that $w(t) \in S$ 
for all $t$, justifying the equivalence of \eqref{eq:model.2} and \eqref{eq:model.3}.

A point $w^*\in S$ is an \emph{equilibrium} of the closed-loop system \eqref{eq:model.3}
if and only if it is a fixed point of the projection:
\bqn
H(w^*) & = & w^*
\eqn
Let $E_1:=\{w \ |\ H(w(t))-w(t)=0 \}$ be the set of equilibrium points. Then we have the following theorem.
\begin{theorem}
	\label{thm:stability.2}
	Suppose A1, A2 and A3 hold.   Starting from any initial point $w(0)$, 
	$w(t)$ remains in a bounded set for all $t$ 
	and $w(t)\rightarrow w^*$ as $t\rightarrow\infty$ for some equilibrium $w^*\in E_1$ that
	is optimal for problem (\ref{eq:opt.1}).
	If strictly inequalities hold in A2, then the equilibrium point $w^*$ of the closed-loop 
	system \eqref{eq:model.3} is unique.
\end{theorem}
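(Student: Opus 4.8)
The plan is to fix an equilibrium, build a weighted quadratic Lyapunov function around it, and close the argument with LaSalle's invariance principle, working throughout with the projection form \eqref{eq:model.3}. First I would invoke Lemma \ref{lemma:P(t)} so that $w(t)\in S$ for all $t$, which makes \eqref{eq:model.3} equivalent to \eqref{eq:model.2} and lets me treat the $(P^g,P^l)$-block as evolving inside the box $X$. By A2 and Theorem \ref{thm:2} a primal-dual optimal point exists, so there is at least one equilibrium $w^*\in E_1$, and it is feasible ($w^*\in S$ with $\mu^*=\omega^*=0$). I would then take $V(w)=\tfrac12 (w-w^*)^T Q (w-w^*)$ with the positive-definite weight $Q=\diag(B,\,M,\,(T^g)^2,\,(T^l)^2,\,(\Gamma^\lambda)^{-1})$. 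This weight is chosen precisely so that $K:=Q\Gamma_1=\diag(B^{1/2},M^{1/2},T^g,T^l,(\Gamma^\lambda)^{-1/2})$ turns $KF$ into the ``unscaled'' saddle vector field of the Lagrangian $L_1$: a direct computation shows that the symmetric part of the Jacobian of $KF$ equals $\diag(0,D,A^g,A^l,0)\succeq 0$, i.e. all the skew, $C$-coupled primal--dual cross terms cancel and $KF$ is monotone. This $V$ is radially unbounded and finite at $w(0)$ by A3.

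The core estimate is $\dot V\le 0$. Writing $r:=H(w)-w$, one has $\dot V=(w-w^*)^T K r$, and I would bound it via the variational-inequality characterization of the (weighted) projection onto $S$ evaluated at $y=w^*\in S$. Monotonicity of $KF$ together with the stationarity of $w^*$ (which, componentwise on the box $X$, gives $F_P(w^*)^T(y_P-P^*)\ge 0$ for all $y_P\in X$) produces the dissipation $(w-w^*)^T K F(w)\ge \omega^T D\omega+(P^g-P^{g*})^T A^g(P^g-P^{g*})+(P^l-P^{l*})^T A^l(P^l-P^{l*})\ge 0$. The hard part is exactly the saturation: when a capacity bound in \eqref{eq:OpConstraints.1} is active the projection residual $r$ is nonzero, the clean cancellation is disrupted, and a naive estimate picks up a nonnegative ``overshoot'' term of order $\|r\|_K^2$. (Indeed, one can exhibit states on $\partial X$ with small damping $D$ at which the bare quadratic distance momentarily increases, so the bound cannot be read off term-by-term.) Showing that this nonsmooth, saturated contribution is dominated by the dissipation, so that $\dot V\le -\omega^T D\omega-\|P^g-P^{g*}\|_{A^g}^2-\|P^l-P^{l*}\|_{A^l}^2\le 0$, is where I expect the principal difficulty to lie, and where the projection technique must be applied carefully, exploiting $w^*\in X$ and the componentwise structure of the box.

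Granting $\dot V\le 0$, boundedness is immediate: the sublevel set $\{w\in S: V(w)\le V(w(0))\}$ is compact and forward invariant, so $w(t)$ stays bounded, giving the first claim. Since $H$ is Lipschitz (projections are nonexpansive), the right-hand side of \eqref{eq:model.3} is continuous, solutions are well defined, and LaSalle's invariance principle applies: $w(t)$ approaches the largest invariant set contained in $\{\dot V=0\}$. On $\{\dot V=0\}$ the dissipation forces $\omega=0$, $P^g=P^{g*}$, $P^l=P^{l*}$; propagating these on the invariant set (where $\dot\omega=\dot P^g=\dot P^l=0$) gives, from the swing equation, $CB\tilde\theta=P^{g*}-P^{l*}-p=0$, hence $\tilde\theta=\tilde\theta^*$ by connectivity, while $\dot P^g=\dot P^l=0$ in the saturated update laws confines the remaining components to the equilibrium conditions. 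Thus the invariant set lies in $E_1$, and by Theorem \ref{thm:2} every such limit is primal-dual optimal, i.e. optimal for \eqref{eq:opt.1}.

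Finally I would upgrade convergence-to-a-set to convergence-to-a-point and settle uniqueness. Because $V$ can be built around an arbitrary equilibrium, $\tfrac12\|w(t)-\bar w\|_Q^2$ is non-increasing for every $\bar w\in E_1$; taking a subsequence $w(t_k)\to\bar w\in E_1$ (which exists by boundedness and the LaSalle conclusion) and using the monotonicity of $t\mapsto\|w(t)-\bar w\|_Q$ forces $w(t)\to\bar w$, proving convergence for some $w^*\in E_1$. Under strict inequalities in A2, Theorem \ref{thm:1} gives that $x^*$, $\mu^*$ and $\lambda^*$ are all unique, so $E_1=\{w^*\}$ is a singleton and $w(t)\to w^*$, the unique equilibrium.
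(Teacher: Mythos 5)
Your proposal correctly reproduces the outer scaffolding (existence of an equilibrium via Theorem \ref{thm:2}, invariance of $X$ via Lemma \ref{lemma:P(t)}, LaSalle, the Zhao--Li subsequence trick to upgrade set-convergence to point-convergence, and uniqueness via Theorem \ref{thm:1}), but the central step is left unproved, and it is precisely the step the paper's construction is designed to circumvent. You build the pure weighted quadratic $V(w)=\tfrac12(w-w^*)^TQ(w-w^*)$ and then concede that when a capacity bound is active the projection residual $r=H(w)-w$ produces a nonnegative ``overshoot'' term that must be dominated by the dissipation, saying this is ``where I expect the principal difficulty to lie.'' That is the gap. Concretely, the projection inequality at $y=w^*$ gives $\dot V\le -\sum_i K_{ii}\,r_i\,(r_i+F_i(w))-(w-w^*)^TKF(w)$; on a clipped component one has $r_i$ and $r_i+F_i$ of opposite signs, so the first sum contributes a \emph{positive} amount of order $|r_i|\,|F_i|$, and $F_i$ contains the multiplier $\lambda_i$, which appears nowhere in the dissipation $\omega^TD\omega+\|P^g-P^{g*}\|^2_{A^g}+\|P^l-P^{l*}\|^2_{A^l}$. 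There is therefore no reason the overshoot is dominated, and no argument is offered; the claimed inequality $\dot V\le -\omega^TD\omega-\cdots$ is asserted, not established. Since your boundedness claim, your LaSalle set, and your final monotone-distance argument all rest on this inequality, the whole proof hangs on the missing step.

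The paper resolves this by \emph{not} using the quadratic function as the primary Lyapunov candidate. It takes $V_1(w)=-\left(H(w)-w\right)^TF(w)-\tfrac12\|H(w)-w\|_2^2+\tfrac{k}{2}(w-w^*)^T\Gamma_1^{-2}(w-w^*)$, where the first two terms are Fukushima's projection gap function (nonnegative on $S$, vanishing exactly at fixed points of $H$) and $k>0$ is chosen small enough that $\Gamma_1-k\Gamma_1^{-1}\succ0$. Its derivative splits into four terms, each nonpositive for structural reasons: two by the projection variational inequality applied at $y=w^*$ and at $y=w\in S$, one equal to $-(\dot\omega^TD\dot\omega+\dot P^{gT}A^g\dot P^g+\dot P^{lT}A^l\dot P^l)$ because the symmetric part of $\Gamma_1^{-1}\nabla_wF$ is $\diag(0,D,A^g,A^l,0)\succeq0$, and one by the saddle-point property of $\hat L_1$. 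The saturation causes no trouble because the gap-function terms absorb the residual exactly, and the small quadratic is needed only to force convergence to a single limit point. If you want to keep a quadratic candidate you would have to supply the missing domination argument; otherwise you should adopt the gap-function construction, which is the key idea of the paper's proof and is absent from yours.
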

The theorem implies that if strict inequalities hold in A2, then,
starting from any initial point $w(0)$, the trajectory 
$w(t)$ of the closed-loop system \eqref{eq:model.3} converges to the unique equilibrium 
$w^*$ as $t\to \infty$.


We comment on the proof of the theorem given in Appendix \ref{proof:thmstability2}.
Unlike the quadratic Lyapunov function used 
in \cite{arrow1958studies, feijer2010stability, rantzer2009dynamic, Changhong:Design, 
LiZhaoChen2016, Mallada-2017-OLC-TAC} for the analysis of primal-dual algorithms,
we use the following Lyapunov function: 
\begin{align}
	\label{eq:lyapunov.1}
	V_1(w) & =  - \left( H(w)-w \right)^T F(w)  \, - \, \frac{1}{2} ||H(w)-w||^2_2 \nonumber\\
	&\quad +\frac{1}{2}k(w-w^*)^T\Gamma_1^{-2}(w-w^*) 
\end{align}
where $w^*$ is an equilibrium point (to be determine later) and $k>0$ is 
small enough $k>0$ such that the diagonal matrix 
$\Gamma_1-k\Gamma_1^{-1} > 0$, i.e., is strictly positive definite.
The first part of $V_1$ is motivated by the observation in 
\cite{Fukushima:Equivalent} that $H(w)-w$ with a stepsize computed from
an exact line search defines an iterative descent algorithm for minimizing 
the following function over $S$:
\bqn
	\hat V_1(w) & = & - \left( H(w)-w \right)^T F(w)  \, - \, \frac{1}{2} ||H(w)-w||^2_2 
\eqn
It is proved in \cite[Theorem 3.1]{Fukushima:Equivalent}  that $\hat V_1(w) \geq 0$ 
on $S$ and $\hat V_1(w)=0$ holds only at any equilibrium $w^* = H(w^*)$.
The use of $\tilde\theta$ instead of $\theta$ in \eqref{eq:model.2} and the 
definitions of $F$ and $\Gamma_1$ in \eqref{eq:Fz.1}\eqref{eq:model.3}
are carefully chosen in order to prove that
$\dot V_1(w(t))\leq 0$ along any solution trajectory.   
The second part 
\bqn
\frac{1}{2}k(w-w^*)^T\Gamma_1^{-2}(w-w^*) 
\eqn
of $V_1$ is motivated by the quadratic Lyapunov function used
in \cite{arrow1958studies, feijer2010stability, rantzer2009dynamic, Changhong:Design, 
LiZhaoChen2016, Mallada-2017-OLC-TAC} for the analysis of primal-dual algorithms.
While the first part $\hat V_1$ is critical for proving $\dot V_1\leq 0$, implying that
any trajectory $w(t)$ of the closed-loop system converges to a set of equilibrium
points by LaSalle's invariance principle, the quadratic term
$(w- w^*)^T\Gamma_1^{-2}(w- w^*)$ in $V_1$ is used to prove 
that $w(t)$ actually converges to a limit point, using the technique due
to \cite{Changhong:Design, LiZhaoChen2016}.

\section{Case studies}
\subsection{System configuration}
To test the optimal frequency controller, we modify Kundur's four-machine, two-area system \cite{Fang:Design} \cite{Kundur:Power} by expanding it to a four-area system. Each area has one (aggregate) generator (Gen1$\sim$Gen4), one controllable (aggregate) load (L1c$\sim$L4c) and one uncontrollable (aggregate) load (L1$\sim$L4), as shown in Fig.\ref{fig:system}. The parameters of generators and controllable loads are given in Table \ref{tab:SysPara}. For others one can refer to \cite{Kundur:Power}. The total uncontrollable load in each area are identically 480MW. At time $t=10s$, we add step changes on the uncontrollable loads in four areas to test the performance of our controllers. 

All the simulations are implemented in PSCAD \cite{website:PSCAD}
with 8GB memory and 2.39 GHz CPU.  The detailed electromagnetic transient model of three-phase synchronous machines is adopted to simulate generators with both governors and exciters.  The uncontrollable loads L1-L4 are modelled by the fixed load in PSCAD, while controllable load L1c-L4c are formulated by the self-defined controlled current source. The closed-loop system diagram is shown in Fig.\ref{fig:control}. We only need measure loacal frequency, generation, controllable load and tie-line power flows to compute control demands. There are no need of uncontrollable load and communication from other areas.
Note that in the simulation, all 
variables are added by their initial steady state values to explicitly show the actual values.
\begin{figure}[htp]
        \centering
        \includegraphics[width=0.4\textwidth]{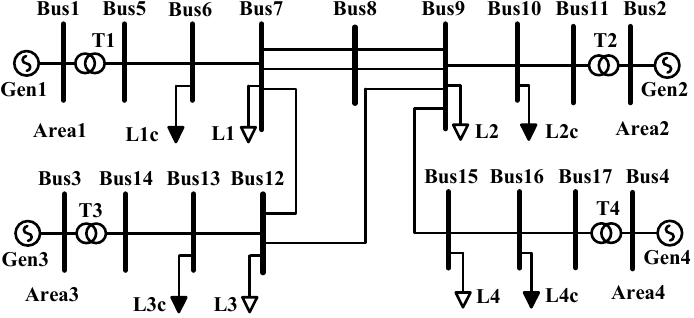}
        \caption{Four-area power system}
        \label{fig:system}
\end{figure}

\begin{figure}[htp]
	\centering
	\includegraphics[width=0.35\textwidth]{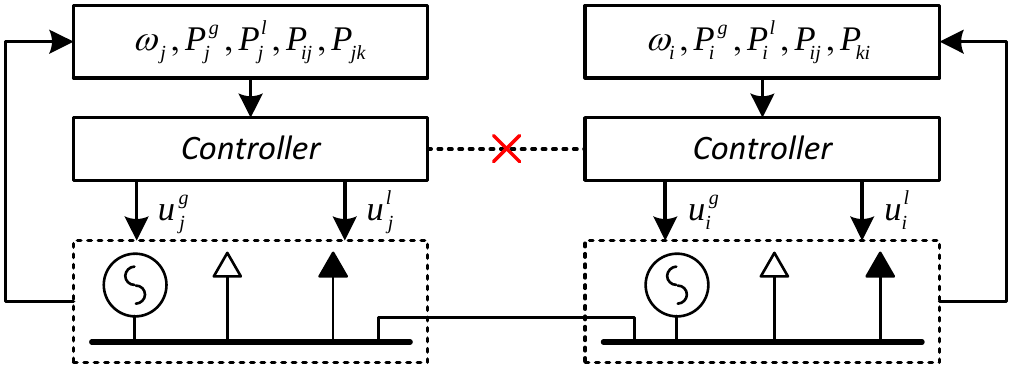}
	\caption{Closed-loop system diagram}
	\label{fig:control}
\end{figure}

\begin{table}[http]
        \centering
        \caption{\\ \textsc{System parameters}}
        \label{tab:SysPara}
        \begin{tabular}{c c c c c c c}
                \hline 
                Area $j$ & $D_j$ & $R_j$ & $\alpha_j$ & $\beta_j$ &$T^g_j$ &$T^l_j$\\
                \hline
                1 & 0.04  & 0.04  & 2    & 2.5  & 4    & 4 \\
                2 & 0.045  & 0.06  & 2.5  & 4    & 6    & 5 \\
                3 & 0.05  & 0.05  & 1.5  & 2.5  & 5    & 4 \\
                4 & 0.055  & 0.045 & 3    & 3    & 5.5  & 5 \\
                \hline
        \end{tabular}
\end{table}

\subsection {Simulation Results}

In the simulation, the generations in each area are (625.9, 562.7, 701.7, 509.6) MW and the controllable loads are all 120 MW. The load changes are given in Table \ref{tab:DistConstraints}, which are unknown to the controllers. Here we use the method mentioned in Remark 2 to estimate the load change. We also show the operational constraints on generations and controllable loads in individual control areas in Table \ref{tab:DistConstraints}. 

\begin{table}[htb]
	\centering
	\caption{ \\ \textsc {Capacity limits and load disturbance }}
	\label{tab:DistConstraints}
	\begin{tabular}{c c c c}
		\hline 
		Area $j$ & Load changes & [$\underline{P}^g_j$, $\overline{P}^g_j$] (MW) & [$\underline{P}^l_j$, $\overline{P}^l_j$] (MW)\\
		\hline
		1 & 90 MW  & [600, 730] & [75, 120]\\
		2 & 90 MW  & [550, 680] & [80, 120]\\
		3 & 90 MW  & [650, 810] & [80, 120]\\
		4 & 120 MW & [500, 640] & [55, 120]\\
		\hline
	\end{tabular}
\end{table}

\subsubsection {Stability and optimality}
The dynamics of local frequencies  and tie-line power flows are illustrated in  Fig.\ref{fig:stab}. Both the frequency and tie-line power deviations are restored in all the four control areas. The generations and controllable loads are different from those before disturbance, indicating that the system is stabilized at a new equilibrium point. The resulting equilibrium point is  given in Table \ref{tab:eper}, which is identical to the optimal solution of  \eqref{eq:opt.1} computed by centralized optimization using CVX. The simulation results confirm our theoretic analyses, verifying that our controller can autonomously guarantee the frequency stability while achieving optimal operating point in a completely decentralized manner.

\begin{figure}[htp]
        \centering
        \includegraphics[width=0.5\textwidth]{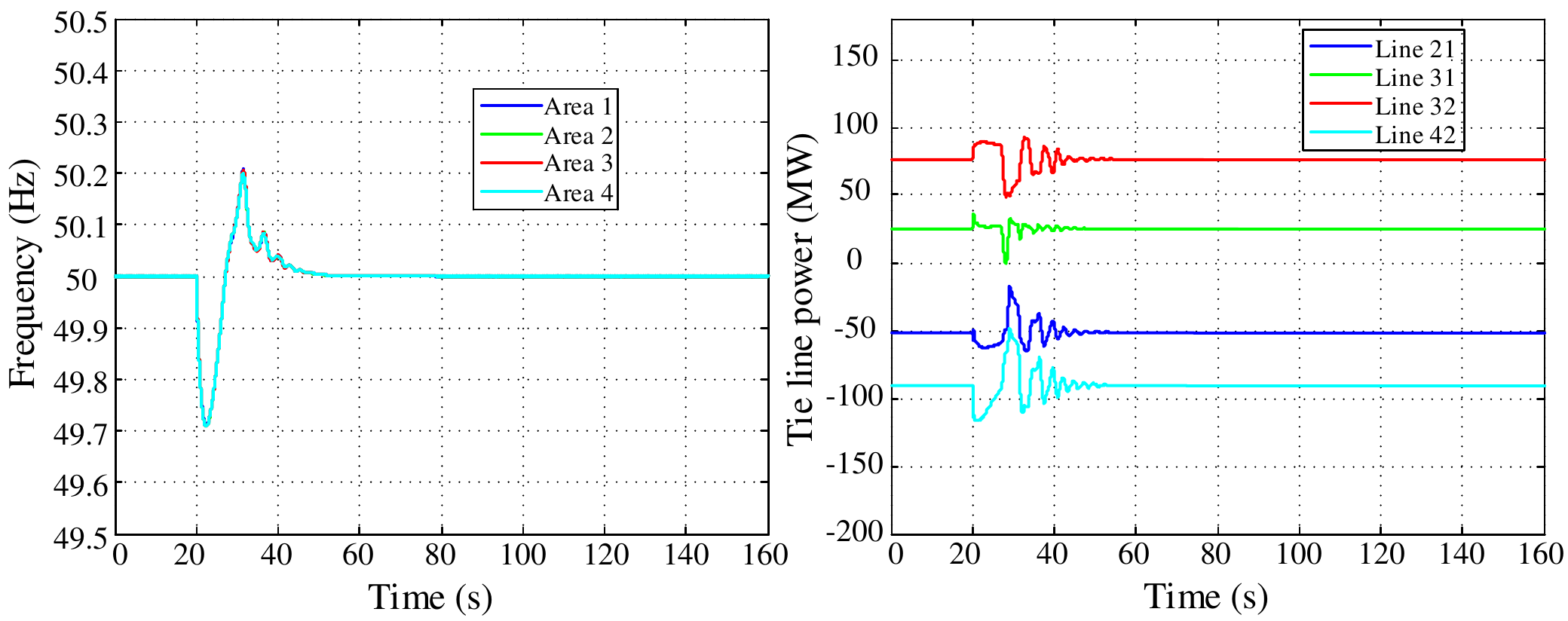}
        \caption{Dynamics of frequencies (left) and tie-line flows (right) in per-node balance case}
        \label{fig:stab}
\end{figure}

\begin{table}[htb]
        \centering
        \caption{\\ \textsc {Equilibrium points}}
        \label{tab:eper}
        \begin{tabular}{c c c c c}
                \hline 
                & Area 1 & Area 2 & Area 3& Area 4\\
                \hline
                $P^{g*}_j $ (MW) & 676 & 618  & 758  & 570\\
                $P^{l*}_j $ (MW) & 80  & 85.3 & 86.2 & 60 \\
                \hline
        \end{tabular}
\end{table}

\subsubsection{Dynamic performance}
In this subsection, we analyze the impacts of regulation capacity constraints on the dynamic performance. To  this end, we compare the dynamic responses of the frequency controllers with and without input saturations. The trajectories of mechanical powers of turbines and controllable loads are shown in Fig.\ref{fig:dynamic.mec} and Fig.\ref{fig:dynamic.2}, respectively. In this case, the system frequency and tie-line flows are restored, and the same optimal equilibrium point is achieved. With the saturated   controller, the mechanical power of turbines and controllable loads are  strictly within the limits in  transient. On the contrary, the controller without saturation results in considerable violation of the capacity constraints during transient, which is practically infeasible. 


\begin{figure}[!t]
	\centering
	\includegraphics[width=0.5\textwidth]{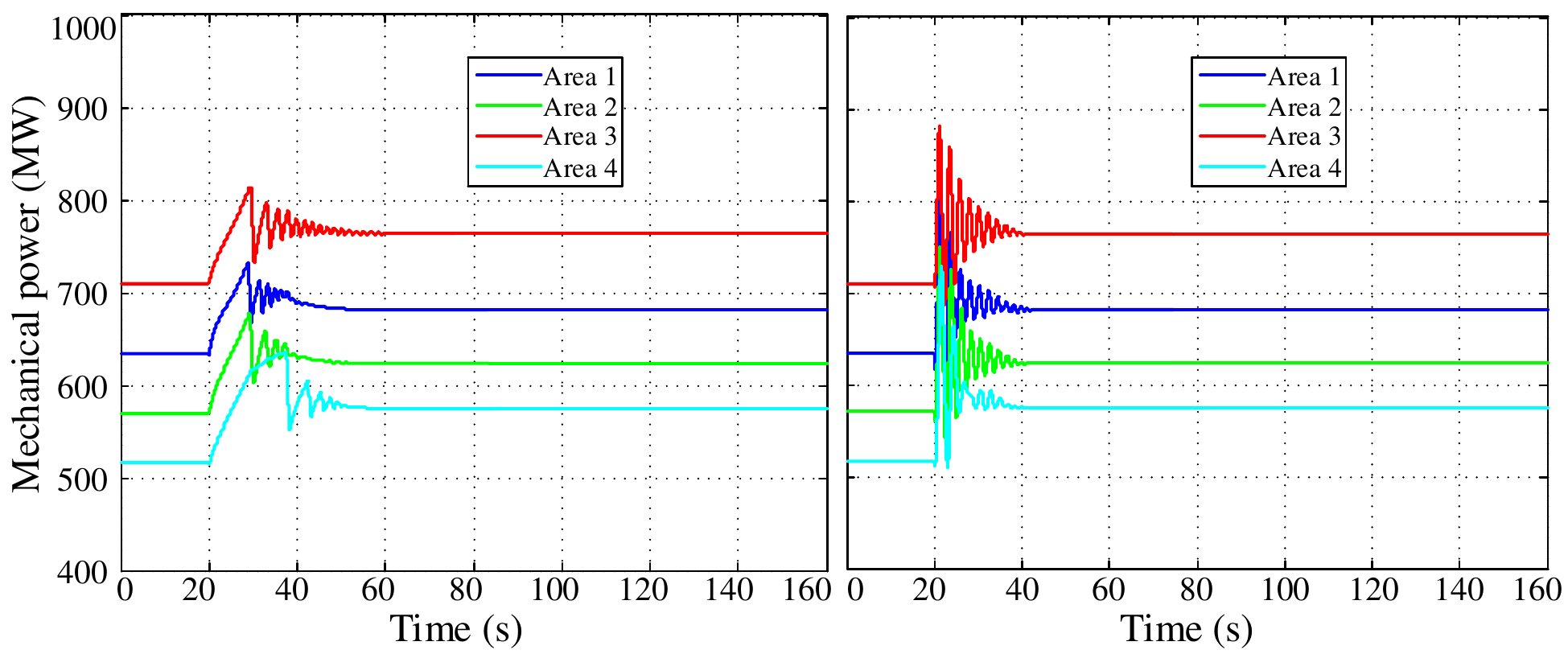}
	\caption{Mechanical outputs with(left)/without(right) capacity constraints}
	\label{fig:dynamic.mec}
\end{figure}


\begin{figure}[!t]
        \centering
        \includegraphics[width=0.5\textwidth]{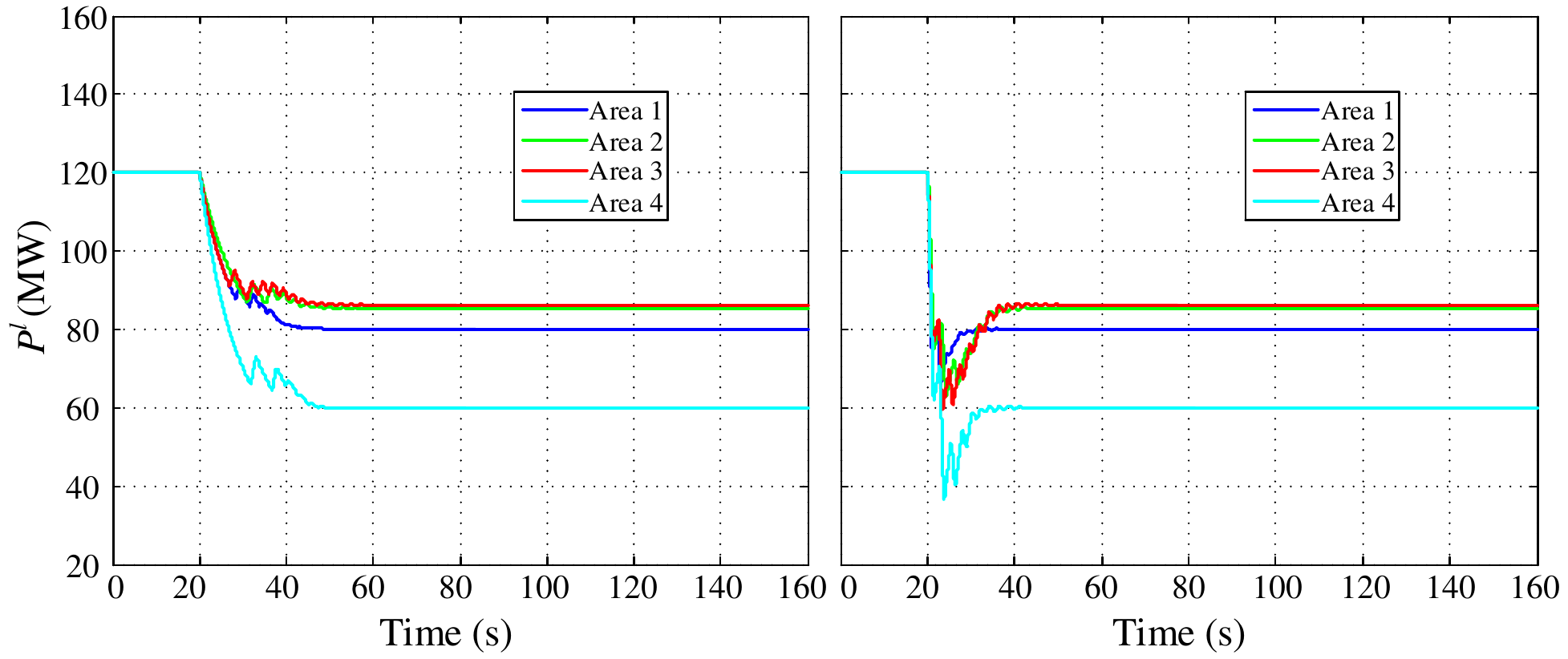}
        \caption{Controllable loads with(left)/without(right) capacity constraints}
        \label{fig:dynamic.2}
\end{figure}

\section{Conclusion}
We have proposed a decentralized optimal frequency control with aggregate generators and controllable loads. The controller can autonomously restore the nominal frequencies and tie-line powers after unknown load disturbances while minimizing the regulation cost. The capacity constraints on the generations and the controllable loads can always be satisfied even during transient. We have revealed that the closed-loop system carry out a primal-dual algorithm to solve the associated optimal dispatch problem, guaranteeing the optimality of closed-loop equilibria. We have used the projection technique combined with LaSalle's invariance principle to prove the asymptotically stability of the closed-loop system. Simulations on the modified Kundur's power system verify the efficacy of our designs. 

%

\bibliographystyle{IEEEtran}
\bibliography{mybib,PowerRef-201202}

\newpage

\appendices

\makeatletter
\@addtoreset{equation}{section}
\@addtoreset{theorem}{section}
\makeatother
\renewcommand{\theequation}{A.\arabic{equation}}
\renewcommand{\thetheorem}{A.\arabic{theorem}}

\section{Proofs of Theorem \ref{thm:2} and Theorem \ref{thm:1}}
\label{sec:ProofThms1and2}

We start with a lemma.
\begin{lemma}
	\label{lemma:1}
	Suppose $(x^*, u^*)$ is optimal for \eqref{eq:opt.1}.  Then $\omega^*=0$
	and $\theta^*  = \theta^*_0\, \textbf{1}$ where $\textbf{1}$ is the vector
	with all entries being 1.
\end{lemma}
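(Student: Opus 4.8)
The plan is to establish the result by a direct exchange (variational) argument, exploiting that the frequency $\omega$ enters the objective \eqref{eq:opt.1d} only through the penalty $\frac{1}{2}\omega^T D\omega\ge 0$, while it is tied to $\theta$ solely through the feasibility constraints. First I would record the structural fact already noted in the discussion of \eqref{eq:opt.1}: combining the per-node balance requirement \eqref{eq:balance.node} with \eqref{eq:opt.1a} forces $U(\theta,\omega)=0$ at every feasible point, i.e., by \eqref{eq:defE.1}, $D\omega+CBC^T\theta=0$. This identity holds in particular at the optimal point $(x^*,u^*)$.

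Next, given the optimal $(x^*,u^*)$, I would construct a competitor $(x',u')$ that keeps $P^{g},P^{l}$ unchanged (hence $u^g=P^g$ and $u^l=P^l$ are unchanged via \eqref{eq:opt.1b}\eqref{eq:opt.1c}), but replaces the angle/frequency pair by $\omega'=0$ and $\theta'=c\mathbf{1}$ for an arbitrary constant $c$, where $\mathbf{1}$ is the all-ones vector. I would then verify feasibility of this competitor: the box constraints \eqref{eq:OpConstraints.1} and the balance constraint \eqref{eq:balance.node} depend only on $(P^g,P^l)$ and are untouched, while \eqref{eq:opt.1a} reduces to requiring $U(\theta',0)=CBC^T(c\mathbf{1})=0$, which holds because $C^T\mathbf{1}=0$ for the incidence matrix $C$ (every column of $C$ has one $+1$ and one $-1$).

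Then I would compare objective values. Since the only term of \eqref{eq:opt.1d} affected by the substitution is $\frac{1}{2}\omega^T D\omega$ with $D\succ 0$, the competitor $(x',u')$ achieves an objective smaller by exactly $\frac{1}{2}(\omega^*)^T D\,\omega^*\ge 0$. Optimality of $(x^*,u^*)$ therefore forces $(\omega^*)^T D\,\omega^*=0$, and since $D\succ 0$ this yields $\omega^*=0$. Finally, substituting $\omega^*=0$ into the feasibility identity $D\omega^*+CBC^T\theta^*=0$ gives $CBC^T\theta^*=0$; because the graph is connected and $B\succ 0$, the matrix $CBC^T$ is the weighted graph Laplacian whose null space is $\mathrm{span}\{\mathbf{1}\}$, so $\theta^*=\theta_0^*\mathbf{1}$, as claimed.

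The step I expect to require the most care is verifying feasibility of the constructed competitor, specifically confirming that perturbing $(\theta,\omega)$ while holding $(P^g,P^l)$ fixed leaves \eqref{eq:opt.1b}\eqref{eq:opt.1c} intact and that $U(\theta',0)=0$ via $C^T\mathbf{1}=0$; the remaining conclusions follow immediately from the strict positive definiteness of $D$ and the connectedness of the network.
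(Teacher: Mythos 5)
Your proposal is correct and follows essentially the same route as the paper's proof: both construct a feasible competitor with zero frequency and constant angle (feasibility resting on $U(\theta,\omega)=0$ from \eqref{eq:balance.node} and \eqref{eq:opt.1a} together with $C^T\mathbf{1}=0$), conclude $\omega^*=0$ from the strict decrease of the $\frac{1}{2}\omega^TD\omega$ term, and then identify $\theta^*$ via the one-dimensional null space of the weighted Laplacian $CBC^T$. The only cosmetic difference is that the paper phrases the first step as a proof by contradiction while you argue directly.
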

\begin{proof}
	Suppose for the sake of contradiction that $\omega^*\neq 0$.  
	Construct from $x^*$ another
	point $\hat x$ by setting $\hat\theta=0$, $\hat \omega=0$ and keeping
	the other components of $x^*$ unchanged.  
	Since $x^*$ satisfies both \eqref{eq:balance.node} and \eqref{eq:opt.1a}
	we must have $U(\theta^*, \omega^*)=0$.  This also holds for $\hat x$,
	i.e., $U(\hat\theta, \hat\omega)=0$, and hence $(\hat x, u^*)$ remains feasible
	since other components of $\hat x$ are the same as those of $x^*$.
	Moreover $(\hat x, u^*)$ has a strictly lower cost than $(x^*, u^*)$, 
	contradicting the optimality of $(x^*, u^*)$.
	Hence any optimal $(x^*, u^*)$ must have $\omega^*=0$.
	
	We claim that $\omega^* = 0$ implies that $\theta^*=\theta^*_0\, \textbf{1}$.  
	For any feasible point $x$,     \eqref{eq:balance.node} and \eqref{eq:opt.1a} 
	imply that 
	\bqn
	U(\theta, \omega) & = & D\omega + CBC^T \theta \ \ = \ \ 0
	\eqn
	Hence we have $CBC^T \theta^* = 0$ at an optimal $x^*$.  Since $CBC^T$ is 
	an $(n+1)\times (n+1)$ matrix with rank $n$, its null space has dimension 1.
	The vector $\textbf{1}$ is in its null space because $C^T\textbf{1} = 0$.
	Hence $\theta^*=\theta^*_0\, \textbf{1}$.
\end{proof}

Lemma \ref{lemma:1} at optimality implies that the frequencies are restored. Noticing that  $\theta^*  = \theta^*_0\, \textbf{1}$ means all angles are equal, such an optimal solution also implies 
that all the tie-line power flows are restored to their nominal values. 

\begin{lemma}
	\label{lemma:2}
	Suppose $(x^*, \rho^*)$ is primal-dual optimal.  Then
	\bqn
	u^{g*}_j \ = \
	P^{g*}_j & = & \left[ 
	P^{g*}_j - \gamma^g_j \left( \alpha_j P^{g*}_j + \omega^*_j + \lambda_j^* \right)
	\right]_{\underline P^g_j}^{\overline P^g_j}
	\nonumber 	\\
	u^{l*}_j \ = \
	P^{l*}_j & = & \left[ 
	P^{l*}_j - \gamma^l_j \left( \beta_j P^{l*}_j - \omega^*_j - \lambda_j^* \right)
	\right]_{\underline P^l_j}^{\overline P^l_j}
	\eqn
	for any $\gamma^g_j>0$ and $\gamma^l_j>0$.
\end{lemma}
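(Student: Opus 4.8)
The plan is to read the two displayed equations as gradient-projection fixed-point characterizations of the box-constrained minimization of the Lagrangian $L_1$, and to extract them from the KKT conditions guaranteed by primal-dual optimality. First I would record the gradients of $L_1$ in \eqref{eq:defL.1} with respect to the saturated variables, namely $\partial L_1/\partial P^g_j = \alpha_j P^g_j + \lambda_j + \mu_j$ and $\partial L_1/\partial P^l_j = \beta_j P^l_j - \lambda_j - \mu_j$. Since $(x^*,\rho^*)$ is primal-dual optimal, the stationarity condition in $\omega_j$, namely $D_j(\omega^*_j - \mu^*_j) = 0$ with $D_j>0$, gives $\mu^*_j = \omega^*_j$ (the identification already exploited in Section \ref{subsec:design}). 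Substituting yields $\partial L_1/\partial P^g_j(x^*,\rho^*) = \alpha_j P^{g*}_j + \omega^*_j + \lambda^*_j$ and $\partial L_1/\partial P^l_j(x^*,\rho^*) = \beta_j P^{l*}_j - \omega^*_j - \lambda^*_j$, which are exactly the expressions appearing inside the saturation brackets.

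Next, primal-dual optimality means $(x^*,\rho^*)$ is a saddle point of $L_1$, so $x^*$ minimizes the convex function $L_1(\cdot,\rho^*)$ over the feasible set; in particular $(P^{g*},P^{l*})$ minimizes $L_1(\cdot,\rho^*)$ over the box $X$ defined in \eqref{eq:defX}, the components $\theta,\omega$ being unconstrained. The core step is then the standard equivalence, coordinate by coordinate, between the first-order (variational-inequality) optimality condition for minimizing a differentiable convex function over an interval and the gradient-projection fixed-point equation: $P^{g*}_j$ minimizes $L_1(\cdot,\rho^*)$ subject to $\underline{P}^g_j \le P^g_j \le \overline{P}^g_j$ if and only if $P^{g*}_j = [\,P^{g*}_j - \gamma^g_j\,\partial L_1/\partial P^g_j(x^*,\rho^*)\,]_{\underline{P}^g_j}^{\overline{P}^g_j}$ for every $\gamma^g_j>0$. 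I would verify this by the three cases: $P^{g*}_j$ interior (gradient zero), at the lower bound (gradient $\ge 0$), and at the upper bound (gradient $\le 0$), checking in each that clamping the gradient step to the interval returns $P^{g*}_j$. The identical argument applied to $P^{l*}_j$ yields the second equality in each displayed line.

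Finally, for the leading equality $u^{g*}_j = P^{g*}_j$ (and $u^{l*}_j = P^{l*}_j$), I would invoke Lemma \ref{lemma:1}, which gives $\omega^*=0$ at optimality, so the additive droop term $\omega^*_j/R_j$ in the generation control law \eqref{eq:control.1a} vanishes and $u^g_j(x^*,\rho^*)$ coincides with the bracketed expression, which by the previous paragraph equals $P^{g*}_j$; the load control \eqref{eq:control.1b} carries no such term, so its identity is immediate. Equivalently, feasibility of the optimal point for constraints \eqref{eq:opt.1b}\eqref{eq:opt.1c} forces these equalities directly.

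The main obstacle is the box-projection equivalence in the second paragraph: although elementary, it must be stated with the correct complementarity and sign conventions and shown to hold for \emph{every} positive step size $\gamma^g_j,\gamma^l_j$, since it is precisely the projection onto an interval that renders the fixed-point independent of the step size. Everything else is bookkeeping, namely differentiating $L_1$, using the $\omega_j$-stationarity to identify $\mu^*=\omega^*$, and using $\omega^*=0$ from Lemma \ref{lemma:1} to discard the droop term.
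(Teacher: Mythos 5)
Your proposal is correct and follows essentially the same route as the paper's proof: invoke the KKT/saddle-point characterization of primal-dual optimality, use the $\omega_j$-stationarity $D_j(\omega_j^*-\mu_j^*)=0$ to identify $\mu^*=\omega^*$, and then establish the coordinate-wise equivalence (the paper's ``it can be checked'' step, which you spell out via the three boundary/interior cases) between the sign conditions of box-constrained stationarity and the step-size-independent projection fixed point, concluding $u^{g*}_j=P^{g*}_j$ and $u^{l*}_j=P^{l*}_j$ from constraints \eqref{eq:opt.1b}\eqref{eq:opt.1c}. Your additional remark that $\omega^*=0$ (Lemma \ref{lemma:1}) is what makes the droop term $\omega^*_j/R_j$ in \eqref{eq:control.1a} vanish is a small point of extra care consistent with, not divergent from, the paper's argument.
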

\begin{proof}
	Since \eqref{eq:opt.1} is convex with linear constraints, strong duality holds.  Hence 
	$(x^*, \rho^*)$ is a primal-dual optimal if and only if it satisfies the KKT condition:
	$(x^*, u(x^*, \rho^*))$ is primal feasible and
	\begin{align}
	x^*& = \, \arg\min_x \left\{ L_1(x; \rho^*) \vert \, (x, u(x, \rho^*)) \text{ satisfies } \eqref{eq:OpConstraints.1}
	\eqref{eq:opt.1b} \eqref{eq:opt.1c} \right\}
	\label{eq:lemma.0}
	\end{align}
	From the definition \eqref{eq:defL.1} of $L_1$, $x^*$ satisfies \eqref{eq:lemma.0}
	if and only if $(x^*, u(x^*, \rho^*))$ satisfies 
	\eqref{eq:OpConstraints.1}      \eqref{eq:opt.1b} \eqref{eq:opt.1c} and the first-order
	stationarity condition, i.e., for all $j\in N$,
	\begin{subequations}
		\begin{align}
		\!\!\!\! \alpha_j P^{g*}_j + \mu^*_j+ \lambda^*_j  & \left\{ \begin{array}{lll}
		\geq & 0 & \text{ if } P^{g*}_j = \underline{P}^g_j \\
		= & 0 & \text{ if }  \underline{P}^g_j < P^{g*}_j < \overline{P}^g_j \\
		\leq & 0 & \text{ if } P^{g*}_j = \overline{P}^g_j \\
		\end{array}  \right.
		\label{eq:lemma.1a}
		\\
		\!\!\!\! \beta_j P^{l*}_j - \mu^*_j- \lambda^*_j  & \left\{ \begin{array}{lll}
		\geq & 0 & \text{ if } P^{l*}_j = \underline{P}^l_j \\
		= & 0 & \text{ if }  \underline{P}^l_j < P^{l*}_j < \overline{P}^l_j \\
		\leq & 0 & \text{ if } P^{l*}_j = \overline{P}^l_j \\
		\end{array}  \right.
		\label{eq:lemma.1b}
		\\
		D_j (\omega_j^* - \mu_j^*) & \ \ \ \, = \ \ \, 0
		\label{eq:lemma.1c}
		\\
		\sum_{i:i\rightarrow j} B_{ij} (\mu^*_i - \mu^*_j) & \ \ \ \,= \ 
		\sum_{k:j\rightarrow k} B_{jk} (\mu^*_j - \mu^*_k) 
		\end{align}
		From Lemma \ref{lemma:1} we have $\omega^*=0$ and hence \eqref{eq:lemma.1}
		reduces to $\mu^* = \omega^* = 0$ since $D_j>0$ and
		\label{eq:lemma.1}
	\end{subequations}
	\begin{subequations}
		\begin{align}
		\!\!\!\! \alpha_j P^{g*}_j  + \omega^*_j + \lambda^*_j & \left\{ \begin{array}{lll}
		\geq & 0 & \text{ if } P^{g*}_j = \underline{P}^g_j \\
		= & 0 & \text{ if }  \underline{P}^g_j < P^{g*}_j < \overline{P}^g_j \\
		\leq & 0 & \text{ if } P^{g*}_j = \overline{P}^g_j \\
		\end{array}  \right.
		\\
		\!\!\!\! \beta_j P^{l*}_j  - \omega^*_j - \lambda^*_j & \left\{ \begin{array}{lll}
		\geq & 0 & \text{ if } P^{l*}_j = \underline{P}^l_j \\
		= & 0 & \text{ if }  \underline{P}^l_j < P^{l*}_j < \overline{P}^l_j \\
		\leq & 0 & \text{ if } P^{l*}_j = \overline{P}^l_j \\
		\end{array}  \right.
		\end{align}
		\label{eq:lemma.2}
	\end{subequations}
	It can be checked that \eqref{eq:lemma.2} is equivalent to
	\bqn
	P^{g*}_j & = & \left[ 
	P^{g*}_j - \gamma^g_j \left( \alpha_j P^{g*}_j + \omega^*_j + \lambda_j^* \right)
	\right]_{\underline P^g_j}^{\overline P^g_j}
	\nonumber \\
	\\
	P^{l*}_j & = & \left[ 
	P^{l*}_j - \gamma^l_j \left( \beta_j P^{l*}_j - \omega^*_j - \lambda_j^* \right)
	\right]_{\underline P^l_j}^{\overline P^l_j}
	\eqn
	for any $\gamma^g_j>0$ and $\gamma^l_j>0$.
	The lemma then follows from \eqref{eq:opt.1b}\eqref{eq:opt.1c}.
\end{proof}


We now prove Theorems \ref{thm:2} and \ref{thm:1}.
\begin{proof}[Proof of Theorem \ref{thm:2}]
	\noindent
	$\Rightarrow$: 
	Suppose $(x^*, \rho^*)$ is primal-dual optimal. 
	Then $x^*$ satisfies the operational constraints \eqref{eq:OpConstraints.1}.
	Moreover the right-hand side of \eqref{eq:model.1} vanishes because:
	\bi
	\item $\dot\theta = 0$ since $\omega^*=0$ from Lemma \ref{lemma:1}.
	\item $\dot\omega = 0$ because of \eqref{eq:opt.1a}.
	\item $\dot P^g = \dot P^l=0$ since $\omega^*= 0$ and $x^*$ 
	satisfies \eqref{eq:opt.1b} and \eqref{eq:opt.1c}.
	\ei
	The right-hand side of \eqref{eq:control.1c} vanishes because $x^*$ satisfies 
	per-node power balance \eqref{eq:balance.node}.   By Lemma \ref{lemma:2}
	$(x^*, \rho^*)$ satisfies \eqref{eq:control.1a}\eqref{eq:control.1b}.
	Hence $(x^*, \rho^*)$ is an equilibrium of the closed-loop system 
	\eqref{eq:model.1}\eqref{eq:control.1} that satisfies 
	the operational constraints \eqref{eq:OpConstraints.1}.   Moreover 
	$\mu^* = \omega^* = 0$ by \eqref{eq:lemma.1c} since $D_j>0$ for all $j\in N$.

	\vspace{0.07in}
	\noindent
	$\Leftarrow$: Suppose now $(x^*, \rho^*)$ is an equilibrium of the closed-loop 
	system \eqref{eq:model.1}\eqref{eq:control.1} and satisfies \eqref{eq:OpConstraints.1}
	with $\mu^*=0$.
	Since \eqref{eq:opt.1} is convex with linear constraints, 
	$(x^*, \rho^*)$ is a primal-dual optimal if and only if 
	$(x^*, u(x^*, \rho^*))$ is primal feasible and satisfies \eqref{eq:lemma.0}
	(note that $\nabla_\rho L_1(x^*, \rho^*)=0$ since
	$\dot\mu = \dot\lambda = 0$).
	
	To show that $(x^*, u(x^*, \rho^*))$ is primal feasible, note that since 
	$(x^*, u(x^*, \rho^*))$ is an equilibrium of 
	\eqref{eq:model.1}, it satisfies $\omega^*=0$, 
	\eqref{eq:opt.1a}\eqref{eq:opt.1b}\eqref{eq:opt.1c}, in
	addition to \eqref{eq:OpConstraints.1}.
	Since $(x^*, \rho^*)$ is a closed-loop equilibrium, we have $\dot\lambda^* \equiv 0$ in
	\eqref{eq:control.1c}, implying \eqref{eq:balance.node}.
	Hence $x^*$ is primal feasible.
	
	To show that $(x^*, \rho^*)$ satisfies \eqref{eq:lemma.0}, note that 
	\eqref{eq:opt.1b}\eqref{eq:opt.1c} and \eqref{eq:control.1a}\eqref{eq:control.1b}
	imply that
	\bqn
	P^{g*}_j & = & \left[ 
	P^{g*}_j - \gamma^g_j \left( \alpha_j P^{g*}_j + \omega^*_j + \lambda_j^* \right)
	\right]_{\underline P^g_j}^{\overline P^g_j}
	\nonumber \\
	P^{l*}_j & = & \left[ 
	P^{l*}_j - \gamma^l_j \left( \beta_j P^{l*}_j - \omega^*_j - \lambda_j^* \right)
	\right]_{\underline P^l_j}^{\overline P^l_j}
	\eqn
	for any $\gamma^g_j>0$ and $\gamma^l_j>0$.
	This is equivalent to \eqref{eq:lemma.2}.
	Since $\mu^*=\omega^*=0$, \eqref{eq:lemma.2} is equivalent to
	\eqref{eq:lemma.1} which is equivalent to \eqref{eq:lemma.0}.
	This proves that $(x^*,\rho^*)$ is primal-dual optimal and completes the 
	proof of Theorem \ref{thm:2}.
\end{proof}

Next we prove Theorem \ref{thm:1}.
\begin{proof}[Proof of Theorem \ref{thm:1}] 
	Let $(x^*, \rho^*) = (\theta^*, \omega^*, P^{g*}, P^{l*},$ $\lambda^*, \mu^*)$
	be primal-dual optimal.  
	Lemma \ref{lemma:1} implies that $\omega^*=0$ and $\theta^*$ is unique
	up to the reference angle $\theta_0^*$.     This proves parts 3 and 4 of Theorem \ref{thm:1}.
	
	To prove part 1 of the theorem, 
	since the objective function is strictly convex in $\left(P^g, P^l \right)$ the optimal values
	$\left(P^{g*}, P^{l*}\right)$ are unique.      Hence $x^*$ is unique (up to $\theta_0^*$).   
	
	As for the part 2 of the theorem, from \eqref{eq:lemma.1c} in the proof of Lemma \ref{lemma:2},
	we have  $\mu^*=\omega^* = 0$, implying the uniqueness of $\mu^*$. 
	From \eqref{eq:lemma.1a}\eqref{eq:lemma.1b}, $\lambda_j^*$ is unique if
	either $\underline{P}^g_j<P^{g*}_j<\overline{P}^g_j$ or $\underline{P}^l_j<P^{l*}_j<\overline{P}^l_j$.
	We now prove that this is indeed the case by showing that
	the other four cases cannot hold: (i) $P^{g*}_j = \underline{P}^g_j$ and $P^{l*}_j = \underline{P}^l_j$; 
	(ii) $P^{g*}_j = \underline{P}^g_j$ and $P^{l*}_j = \overline{P}^l_j$; (iii) $P^{g*}_j = \overline{P}^g_j$ and $P^{l*}_j = \underline{P}^l_j$; and (iv) $P^{g*}_j = \overline{P}^g_j$ and $P^{l*}_j = \overline{P}^l_j$. 		
	
	Since $P^{g*}_j - P^{l*}_j=p_j$ for per-node power balance, (ii) and (iii) cannot hold 
	since the inequalities in A2 are strict. 
	Suppose (i) holds.  Then there exists an $\epsilon_j>0$ such
	that $\hat P^{g}_j = \underline{P}^g_j+\epsilon_j<0$ and $\hat P^{l}_j = \underline{P}^l_j+\epsilon_j<0$,
	together with other components of $x^*$, remain a feasible primal solution. 
	However this new feasible solution attains a strictly smaller objective value,
	contradicting the optimality of $x^*$. Thus (i) cannot hold. Similarly (iv) cannot hold. 
	This proves that $\lambda^*$ is unique.
	
	Finally, if $\underline{P}^g_j<P^{g*}_j<\overline{P}^g_j$
	then $\lambda^*_j$ is uniquely determined by $\lambda^*_j = - \alpha_j P^{g*}_j$
	according to \eqref{eq:lemma.1a}.   If 
	$\underline{P}^l_j<P^{l*}_j<\overline{P}^l_j$ then 
	$\lambda^*_j$ is uniquely determined by $\lambda^*_j = \beta P^{l*}_j$
	according to \eqref{eq:lemma.1b}. 
	
	This completes the proof of Theorem \ref{thm:1}.
	
\end{proof}

\section{Proofs of  Lemma \ref{lemma:P(t)} and Theorem \ref{thm:stability.2}}
\renewcommand{\theequation}{B.\arabic{equation}}
\renewcommand{\thetheorem}{B.\arabic{theorem}}
\label{proof:thmstability2}

We prove Lemma \ref{lemma:P(t)} using the first-order inertia dynamics
of \eqref{eq:control.1a} and \eqref{eq:control.1b}.

\vspace{0.1in}
\noindent
\emph{Proof of Lemma \ref{lemma:P(t)}.}
Set 
\bqn
\hat u_j^g(t) & = & \left[ P^g_j(t) - \gamma^g_j \left( \alpha_j P^g_j(t) + \omega_j(t) + \lambda_j(t) \right)
\right]_{\underline P^g_j}^{\overline P^g_j}
\eqn
Then (\ref{eq:model.1c}) can be rewritten as 
\begin{align}
	\label{rewritten}
	T_j^g\dot P_j^g(t) + P_j^g(t) = \hat u_j^g(t)
\end{align}
Apply the Laplace transform to (\ref{rewritten}) to obtain

$${\cal L} \left( P_j^g\right) (s)= {\cal L}\left( \hat u_j^g \right) (s)/(T_j^g s+1).$$

In the time domain $P_j^g (t)$  is then given by convolution:
\begin{align*}
	P_j^g\left( t \right) &=\  \frac{1}{{{T_j^g}}}\int_{{0^ - }}^{ + \infty } {\hat u_j^g\left( {t - \tau } \right){e^{ - \tau /{T_j^g}}} d\tau }  
\\
	&=\  \int_0^{\frac{t}{T^g_j}} \hat u_j^g\left( t - T^g_j \tau \right){e^{ - \tau }}d\tau
\end{align*}
Since $e^{-\tau} >0$ we can replace $\hat u_j^g$ in the integrand
by its lower and upper bounds $\underline P^g_j$ and $\overline P^g_j$ respectively to conclude
\bqn
\int_0^{\frac{t}{T^g_j}} \underline P^g_j \cdot e^{ - \tau } d\tau 
& \le \ \ P_j^g\left( t \right) \ \ \le &
\int_0^{\frac{t}{T^g_j}} \overline P^g_j \cdot e^{ - \tau} d\tau 
\eqn
Hence
\begin{align*}
\underline P_j^g \left( {1 - {e^{ - t/{T_j^g}}}} \right) \ \ \le \ \ P_j^g\left( t \right) 
\ \ \le \ \ \overline P_j^g\left( {1 - {e^{ - t/{T_j^g}}}} \right) 
\end{align*}
and $P_j^g \le P_j^g\left( t \right) \le \overline P_j^g$ for all $t\geq 0$ under
assumptions A1 and A3.
That $\underline P_j^l \le P_j^l\left( t \right) \le \overline P_j^l$ can be proved similarly.
\qed

\vspace{0.1in}
\noindent
\emph{Proof of Theorem \ref{thm:stability.2}.}
We start with a lemma.
\begin{lemma}
	\label{lemma:decreasing.1}
	Suppose A1, A2 and A3 hold. Given any $w(0)\in S$ we have
	\begin{enumerate}
		\item $\dot V_1(w(t))\leq 0, \forall t>0$.
		\item The trajectory $w(t)$ is bounded, i.e., there exists $\overline w$ such
			that $\|w(t)\|\leq \overline w$ for all $t\geq 0$.
	\end{enumerate}
\end{lemma}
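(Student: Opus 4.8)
The plan is to prove both claims through the single Lyapunov function $V_1$ in \eqref{eq:lyapunov.1}, exploiting its two additive pieces separately. First I would record the tools I intend to lean on. By Lemma \ref{lemma:P(t)} the state never leaves $S$, so $w(t)\in S$ for all $t$; by Theorem \ref{thm:2} together with A2 the closed-loop system has at least one equilibrium $w^*\in S$, which I fix as the reference point in $V_1$. The first piece $\hat V_1(w)=-(H(w)-w)^T F(w)-\tfrac12\|H(w)-w\|_2^2$ is precisely Fukushima's regularized gap function for the variational inequality over $S$ with map $F$; by \cite[Theorem 3.1]{Fukushima:Equivalent} it obeys $\hat V_1\ge 0$ on $S$, vanishing exactly at equilibria, and, crucially, it is continuously differentiable even though $H$ is not, with a known closed-form gradient. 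This differentiability is what lets me bypass the nonsmoothness of the projection $H$ that the transient constraints create.

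For part 1 I would differentiate $V_1$ along $\dot w=\Gamma_1(H(w)-w)$. Writing $\Delta(w):=H(w)-w$, the derivative splits as $\dot V_1=\nabla\hat V_1(w)^T\Gamma_1\Delta(w)+k\,(w-w^*)^T\Gamma_1^{-1}\Delta(w)$. I would substitute Fukushima's gradient formula for $\hat V_1$ into the first term and then invoke the two defining inequalities of the Euclidean projection $H(w)=\mathrm{Proj}_S(w-F(w))$: testing against $y=w\in S$ gives a bound of the form $F(w)^T\Delta(w)\le-\|\Delta(w)\|_2^2$, while testing against $y=w^*\in S$ relates $\Delta(w)$ to $w-w^*$. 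The monotonicity of the underlying saddle flow enters here: since $F$ is the sign-adjusted, $\Gamma_1$-scaled gradient of the convex–concave Lagrangian $L_1$, it is monotone in the weighting induced by $\Gamma_1$, and at equilibrium the VI/optimality condition yields $F(w^*)^T(w-w^*)\ge 0$. The aim is to show $\dot{\hat V}_1\le 0$ from the gap-function descent property, and then to control the quadratic term's contribution by choosing $k>0$ small enough that $\Gamma_1-k\Gamma_1^{-1}\succ 0$, as stipulated in \eqref{eq:lyapunov.1}, so that all cross terms collapse into a manifestly nonpositive expression and $\dot V_1\le 0$.

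The hard part will be the bookkeeping of signs and scalings in the combined derivative. The diagonal matrix $\Gamma_1$ destroys ordinary Euclidean monotonicity of $F$, so the projection inequalities (stated in the Euclidean norm) and the monotonicity of $F$ (natural in the $\Gamma_1$-weighted norm) must be carefully reconciled; this is exactly why the paper works with $\tilde\theta$ rather than $\theta$ and inserts the square-root factors $B^{1/2}$, $M^{-1/2}$, $(\Gamma^\lambda)^{1/2}$ into $F$ and $\Gamma_1$. I expect the most delicate point to be bounding the indefinite quadratic contribution $k\,(w-w^*)^T\Gamma_1^{-1}\Delta(w)$, which is what forces the smallness condition on $k$.

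Part 2 should then follow with little extra effort. From $\dot V_1\le 0$ we get $V_1(w(t))\le V_1(w(0))<\infty$ for all $t$, the right-hand side being finite by A3. Since $\hat V_1\ge 0$, the quadratic part satisfies $\tfrac12 k\,(w(t)-w^*)^T\Gamma_1^{-2}(w(t)-w^*)\le V_1(w(t))\le V_1(w(0))$, and because $\Gamma_1^{-2}\succ 0$ this bounds $\|w(t)-w^*\|$ uniformly in $t$. Combined with $(P^g(t),P^l(t))\in X$ from Lemma \ref{lemma:P(t)}, the entire trajectory $w(t)$ lies in a bounded set, which furnishes the claimed constant $\overline w$.
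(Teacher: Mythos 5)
Your proposal follows essentially the same route as the paper: you build on Fukushima's differentiable regularized gap function ($\hat V_1\ge 0$ on $S$ with a closed-form gradient), differentiate $V_1$ along $\dot w=\Gamma_1(H(w)-w)$, dispose of the resulting terms via the projection variational inequalities tested at $y=w$ and $y=w^*$, the monotonicity/saddle structure of $\Gamma_1^{-1}F$ inherited from the convex--concave Lagrangian, and the smallness condition $\Gamma_1-k\Gamma_1^{-1}\succ 0$; boundedness then follows from $\hat V_1\ge 0$ and the quadratic term exactly as in the paper. The only parts left implicit are the regrouping of cross terms into the paper's four explicitly nonpositive pieces and the observation that, because $S$ is a product of intervals and the gain matrices are diagonal, the $\Gamma$-weighted projection coincides with the Euclidean one --- both of which you correctly flag as the bookkeeping to be done.
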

	\emph{Proof of Lemma \ref{lemma:decreasing.1}}  
	We omit $t$ in the proof for simplicity. 
	According  \cite[Theorem 3.2]{Fukushima:Equivalent},  since $F(w)$ is continuously 
	differentiable, $V_1(w)$ defined by \eqref{eq:lyapunov.1} is also continuously 
	differentiable. Moreover its gradient is given by
	\bqn
	\nabla_{w}V_1(w)& \!\!\!\!\!\! = & \!\!\!\!\!\! F(w)-(\nabla_wF(w)-I)(H(w)-w)
	+k\Gamma_1^{-2}(w-w^*)
	\eqn 
	Then the derivative of $V_1(w)$ along the solution trajectory is 
	\begin{subequations}
	\begin{align}
	\dot V_1(w) & =  \nabla_{w}^T V_1(w)\cdot \dot w \ \ = \ \ 
				 \nabla_{w}^T V_1(w)\cdot \Gamma_1 (H(w)-w)
	\nonumber \\
		& = (F(w)-\left(\nabla_wF(w)-I)(H(w)-w))^T\Gamma_1 (H(w)-w\right) \nonumber\\
		&\quad +k(w-w^*)\cdot \Gamma_1^{-1} (H(w)-w)
	\nonumber\\
		& = -(H(w)-w)^T\nabla_wF(w)\Gamma_1(H(w)-w) \nonumber\\
		&\quad - \left( H(w) - (w - F(w)) \right)^T\Gamma_1 (w-H(w))  \nonumber\\
		&\quad +k(w-w^*)^T\cdot \Gamma_1^{-1} (H(w)-F(w)+F(w)-w)
		\nonumber\\
		& = -(H(w)-w)^T\nabla_wF(w)\Gamma_1(H(w)-w) \nonumber\\
		&\quad - \left( H(w) - (w - F(w)) \right)^T\Gamma_1 (w-H(w))  \nonumber\\
		&\quad -k(w-w^*)^T\cdot\Gamma_1^{-1} F(w)
	\nonumber\\
		&\quad +k(w-H(w)+H(w)-w^*)^T \Gamma_1^{-1} (H(w)-(w-F(w)))
	\nonumber\\
		& = k(H(w)-w^*)^T\cdot \Gamma_1^{-1} (H(w)-(w-F(w)))	\label{eq:ProofThm.1a}	\\
		&\quad - \left( H(w) - (w - F(w)) \right)^T(\Gamma_1-k\Gamma_1^{-1}) (w-H(w))  \label{eq:ProofThm.1c}\\
		&\quad -(H(w)-w)^T\nabla_wF(w)\cdot\Gamma_1(H(w)-w)\label{eq:ProofThm.1b}\\
		&\quad -k(w-w^*)^T\cdot \Gamma_1^{-1} F(w)	\label{eq:ProofThm.1d}
	\end{align}
	where $\Gamma_1 := \text{diag}\left( 
	B^{-1/2}, M^{-1/2}, (T^g)^{-1}, (T^l)^{-1}, (\Gamma_1^\lambda)^{1/2} \right)$
	is diagonal and positive definite.	
	We now prove that all terms on the right-hand side are nonpositive and
	hence $\dot V_1(w)\leq 0$.
	\label{eq:ProofThm.1}
	\end{subequations}

	\newcounter{TempEqCnt2}
	\setcounter{TempEqCnt2}{\value{equation}}
	\setcounter{equation}{6}
	\newcounter{mytempeqncnt2}
	
		\begin{figure*}[ht]

		\bq
		\nabla_w F(w) & :=& 
		\underbrace{
			\begin{bmatrix}
				B^{-1/2} & 0 & 0 & 0 & 0 \\
				0 & M^{-1/2} & 0 & 0 & 0 \\
				0 & 0 & (T^g)^{-1} & 0 & 0   \\
				0 & 0 & 0 & (T^l)^{-1} & 0   \\
				0 & 0 & 0 & 0 & (\Gamma^\lambda)^{1/2}
			\end{bmatrix}
		}_{\Gamma_1}
		\cdot
		\left[ 
		\begin{array}{c c c c c}
			0         &  -BC^T   &  0     & 0    & 0  \\
			CB      & D           & -I      & I     & 0  \\
			0         & I             & A^g  &0     & I \\
			0         & -I            & 0      & A^l   & -I \\
			0         & 0            & -I      &I      &0\\
		\end{array} 
		\right]
		\label{eq:ProofThm.3}
		\eq
		
		\hrulefill
		\vspace*{4pt}
	\end{figure*}
	\setcounter{equation}{\value{TempEqCnt2}}

	For the term in \eqref{eq:ProofThm.1a} denote the projection of any $w$ 
	onto $S$ under the norm defined by a (symmetric) positive definite matrix $\Gamma$ by
	\bqn
	\text{Proj}_{S, \Gamma}(w) & := & \arg\min_{y\in S}\ ( y-w )^T\Gamma (y-w)
	\eqn
	By the projection theorem a vector $\hat w_\Gamma$ is equal to the projection
	$\text{Proj}_{S, \Gamma}(w)$ if and only if
	\bq
	\left( \hat w_\Gamma - w \right)^T\Gamma \left(y - \hat w_\Gamma \right) 
	& \geq & 0, \qquad y\in S
	\label{eq:ProofThm.2}
	\eq
	Note that $S =: \prod_i S_i$ is a direct product of intervals $S_i$ and
	$\Gamma = \text{diag}(\Gamma_{ii})$ is diagonal.  Hence the projection
	under the $\Gamma$-norm coincide with the projection under the Euclidean
	norm:
	\bqn
	\text{Proj}_{S, \Gamma}(w) & = & 
			\arg\min_{y: y_i\in S_i}\ \sum_i\, \Gamma_{ii} (y_i-w_i)^2
	\\ & = & 
			\arg\min_{y: y_i\in S_i}\ \sum_i\,  (y_i-w_i)^2 \ \ = \ \ \text{Proj}_S(w)
	\eqn
	Substituting into \eqref{eq:ProofThm.2} we have, for any diagonal positive definite $\Gamma$,
	\bq
	\left( \text{Proj}_S(w) - w \right)^T\Gamma \left(y - \text{Proj}_S(w) \right) 
	& \geq & 0, \quad y\in S
	\label{projection property}
	\eq
	for any $w$.  
	The projection $H(w):=\text{Proj}_S(w-F(w))$ of $w-F(w)$ 
	therefore satisfies (for $\Gamma := \Gamma_1^{-1}$)
	\bq
	\label{eq:projineq}
	\left( H(w) - (w-F(w) \right)^T\Gamma_1^{-1} \left(w^* - H(w) \right) 
	& \geq  & 0
	\eq
	since $w^*\in S$.  This proves
	 that the right-hand side of \eqref{eq:ProofThm.1a} is nonpositive

	To show that (\ref{eq:ProofThm.1c}) is nonpositive we use a similar argument.
	Since $\Gamma := \Gamma_1-k\Gamma_1^{-1} >0$ we can define 
	the projection $\text{Proj}_{S, \Gamma}(w)$ under this $\Gamma$.	 As
	explained above $\text{Proj}_{S, \Gamma}(w) = \text{Proj}_{S}(w)$ and
	hence as before, we have
	\bqn
	\left( H(w) - (w-F(w) \right)^T\Gamma \left(w - H(w) \right) 
	 & \geq  & 0
	\eqn
	since the solution trajectory $w(t)\in S$ for all $t\geq 0$ by Lemma 
	\ref{lemma:P(t)}.  This proves the term in \eqref{eq:ProofThm.1c} is nonpositive.
	
	We will prove that (\ref{eq:ProofThm.1d}) is nonpositive. 
	Along any solution trajectory we always have $\mu(t)\equiv \omega(t)$.
	Substituting into the Lagrangian $L_1(x, \rho)$ in \eqref{eq:defL.1} we
	obtain a function 
	\bqn
	& & \hat L_1(\tilde\theta, P^g, P^l, \lambda, \omega)   
	\ \ := \ \  
	L_1(\theta, \omega, P^g, P^l, \lambda, \omega)
	\\ & = & 
	\frac{1}{2} \left( (P^g)^T A^g P^g + (P^l)^T A^l P^l - \omega^T D \omega \right)
	\\
	& & + \ \lambda^T \!\! \left( P^g - P^l - p \right) + \ \omega^T \!\! \left( P^g - P^l - p - C B\tilde \theta \right)
	\eqn
Write $w_1 := (\tilde\theta, P^g, P^l)$, $w_2 := ( \lambda, \omega)$.
Then $\hat L_1(w_1, w_2)$ is convex in $w_1$ and concave in $w_2$.
It can be verified that \footnote{For notational
simplicity, we have re-arranged the order of the variables in $w$ to 
$w := (\tilde\theta, P^g, P^l, \lambda, \omega)$ and components of $F$ to match the
order of $(w_1, w_2)$.}
\bqn
\Gamma_1^{-1} F(w) & \!\!\!\! = \!\!\!\! & \begin{bmatrix}
	\nabla_{\tilde\theta} \hat L_1 \\
	\nabla_{P^g} \hat L_1 \\
	\nabla_{P^l} \hat L_1 \\
	- \nabla_{\lambda} \hat L_1 \\
	- \nabla_{\omega} \hat L_1
	\end{bmatrix} (w_1, w_2)
	\ = \ 
	\begin{bmatrix} \ \ \nabla_{w_1}\hat L_1 \\ - \nabla_{w_2} \hat L_1 \end{bmatrix}\!
	(w_1, w_2)
\eqn
	Hence, we have
	\begin{align}
		&\ -k(w-w^*)^T\cdot \Gamma_1^{-1} F(w) \nonumber \\
		= &\  -k(w_1-w_1^*)^T \nabla_{w_1}\hat L_1(w_1,w_2) + k(w_2-w_2^*)^T \nabla_{w_2}\hat L_1(w_1,w_2) \nonumber \\
		\le & \ k \bigg(\hat L_1(w_1^*,w_2) - \hat L_1(w_1,w_2) +\hat L_1(w_1,w_2) - \hat L_1(w_1,w^*_2)\bigg) \nonumber \\
		= &  \ k\bigg(\hat L_1(w_1^*,w_2) - \hat L_1(w^*_1,w^*_2) + \hat L_1(w^*_1,w^*_2) - \hat L_1(w_1,w^*_2)\bigg)\nonumber \\
		\le & \ 0	
	\end{align}
   where the first inequality follows because $\hat L_1$ is convex in $w_1$ and concave in $w_2$ and the second inequality follows
   because $(w^*_1, w^*_2)$ is a saddle point.    
   Therefore (\ref{eq:ProofThm.1d}) is nonpositive.  
   
Finally to prove that \eqref{eq:ProofThm.1b} is nonpositive note that
	\bqn
	(H(w)-w)^T\nabla_wF(w)\Gamma_1(H(w)-w) & \!\!\!\! = \!\!\!\! & 
			\dot w^T \left( \Gamma_1^{-1} \nabla_wF(w) \right) \dot w
	\eqn
	where $\dot w := (\dot{\tilde\theta}, \dot\omega, \dot P^g, \dot P^l, \dot\lambda)$.
	From \eqref{eq:Fz.1}, $\nabla_w F(w)$ is given by \eqref{eq:ProofThm.3}.

	Hence  
	\bqn
	\dot w^T \left(\Gamma_1^{-1} \nabla_w F(w) \right) \dot w
	& \!\!\!\! = \!\!\!\! &  \dot\omega^T D \dot\omega \, + \, \dot P^{gT} A^g \dot P^{g} \, +\, \dot P^{lT} A^l \dot P^l 
	\, \geq \, 0
	\eqn
	and hence \eqref{eq:ProofThm.1b} is nonpositive.
   
 This also implies that  
 	\setcounter{equation}{7}
	\begin{align}
	\dot V_1(w(t)) & \leq  - \left(
		\dot\omega^T D \dot\omega \, + \, \dot P^{gT} A^g \dot P^{g} \, +\, \dot P^{lT} A^l \dot P^l \right) 
		\ \, \leq \ \, 0
		\label{eq:ProofThm.4}
	\end{align}
for all $t\geq 0$.  This proves the first assertion of the lemma.

To prove that the trajectory $w(t)$ is bounded
   note that \cite[Theorem 3.1]{Fukushima:Equivalent} proves that 
   $\hat V_1(w) := - \left( H(w)-w \right)^T F(w)  \, - \, \frac{1}{2} ||H(w)-w||^2_2$ satisfies
$\hat V_1(w) \geq 0$ over $S$.  Hence
\bqn
\frac{1}{2}k(w(t)-w^*)^T\Gamma_1^{-2}(w(t)-w^*) &\!\!\! \le \!\!\! & V_1(w(t)) \ \le \ V_1(w(0))
\eqn
   indicating the trajectory $w(t)$ is bounded, as desired.
\qed

\begin{lemma}
	\label{lemma:LaSalle}
	Suppose A1, A2 and A3 hold. Given any $w(0)\in S$, we have
	\begin{enumerate}
		\item The trajectory $w(t)$  converges to the largest invariant set 
			$W_1^*$ contained in $W_1=\{w\in S| \ \dot P^g= \dot P^l= \dot \omega=0\}$.
		\item Every point $w^*\in W_1^*$ is an equilibrium point of \eqref{eq:model.3}.
	\end{enumerate}
\end{lemma}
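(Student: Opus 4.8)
The plan is to obtain part 1 from LaSalle's invariance principle applied to the Lyapunov function $V_1$ of \eqref{eq:lyapunov.1}, whose properties are already in hand from Lemma \ref{lemma:decreasing.1}, and then to sharpen the description of the limit set in part 2 using the specific structure of the dynamics \eqref{eq:model.2}. For part 1 I would first record that $V_1$ is continuously differentiable, $\dot V_1 \leq 0$ along trajectories, and $w(t)$ is bounded, all from Lemma \ref{lemma:decreasing.1}. Since $\hat V_1 \geq 0$ on $S$ by \cite[Theorem 3.1]{Fukushima:Equivalent} while the quadratic term $\frac{1}{2}k(w-w^*)^T\Gamma_1^{-2}(w-w^*)$ is radially unbounded, $V_1$ has compact sublevel sets, so $\{w \in S : V_1(w) \leq V_1(w(0))\}$ is a compact, positively invariant set containing the whole trajectory. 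LaSalle's principle then guarantees that $w(t)$ approaches the largest invariant set contained in $\{w : \dot V_1(w) = 0\}$. The link to $W_1$ is the sharper estimate \eqref{eq:ProofThm.4}: $\dot V_1(w)=0$ forces $\dot\omega^T D \dot\omega + \dot P^{gT} A^g \dot P^g + \dot P^{lT} A^l \dot P^l = 0$, and because $D, A^g, A^l$ are diagonal and positive definite this yields $\dot\omega = \dot P^g = \dot P^l = 0$, i.e. $\{\dot V_1 = 0\} \subseteq W_1$. Hence the largest invariant set in $\{\dot V_1 = 0\}$ lies inside $W_1^*$, and $w(t) \to W_1^*$.

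For part 2 I would take $w^* \in W_1^*$ and, using invariance, run the trajectory through it, which stays in $W_1 \subseteq \{\dot\omega = \dot P^g = \dot P^l = 0\}$; thus $\omega$, $P^g$, $P^l$ are constant along it. It then remains to show $\dot{\tilde\theta} = 0$ and $\dot\lambda = 0$, the two remaining blocks of $\dot w = \Gamma_1(H(w)-w)$, so that $w^*$ is a fixed point $H(w^*)=w^*$. Since $\omega, P^g, P^l$ are constant while $\dot\omega = M^{-1}(P^g - P^l - p - D\omega - CB\tilde\theta) \equiv 0$, the quantity $CB\tilde\theta$ must be constant in $t$; as $\dot{\tilde\theta} = C^T\omega$ this gives $CBC^T\omega = 0$, and because $B$ is positive definite and $G$ is connected we have $\ker CBC^T = \ker C^T$, so $C^T\omega = 0$ and therefore $\dot{\tilde\theta} = C^T\omega = 0$. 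For $\dot\lambda$ I would observe that $\dot\lambda = \Gamma^\lambda(P^g - P^l - p)$ is a constant vector along this trajectory; a nonzero value would make $\lambda(t)$ grow linearly and the trajectory escape to infinity, contradicting the boundedness of the invariant set. Hence $\dot\lambda = 0$, all five blocks of $\dot w$ vanish, and $w^*$ is an equilibrium of \eqref{eq:model.3}.

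The main obstacle I anticipate is exactly the $\dot\lambda = 0$ step: unlike $\dot{\tilde\theta}$, it cannot be extracted from invariance by differentiation alone, since that only yields $\ddot\lambda = \Gamma^\lambda(\dot P^g - \dot P^l) = 0$, i.e. $\dot\lambda$ constant. Ruling out a constant nonzero drift genuinely requires boundedness of the invariant set, which is where the saturation structure of the controller enters: at the capacity limits the sign of $P^g - P^l - p$ forced by A2 drives $\lambda$ back toward the interior and prevents escape, so $W_1^*$ is bounded (equivalently, one may restrict attention to the compact limit set reached in part 1). A secondary technical point worth flagging is that \eqref{eq:model.3} is nonsmooth because of the projection $H$; the use of LaSalle is nonetheless legitimate, since the vector field $\Gamma_1(H(w)-w)$ is Lipschitz, guaranteeing existence and uniqueness of solutions, and $V_1$ is $C^1$, which is all the invariance principle requires.
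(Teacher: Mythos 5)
Your proposal is correct and follows essentially the same route as the paper: LaSalle's invariance principle applied to $V_1$ together with the refined bound \eqref{eq:ProofThm.4} to identify $\{\dot V_1=0\}$ with $W_1$, and then, on the invariant set, ruling out constant nonzero drifts in the remaining blocks via boundedness of the trajectory from Lemma \ref{lemma:decreasing.1}. The only (minor) deviation is your derivation of $\dot{\tilde\theta}=0$ algebraically from $CBC^T\omega=0$ and connectivity of the graph, where the paper instead reuses the same linear-growth-versus-boundedness contradiction it applies to $\lambda$; both arguments are valid, and your closing observation that the Lipschitz continuity of $\Gamma_1(H(w)-w)$ legitimizes the use of LaSalle is a worthwhile point the paper leaves implicit.
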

\begin{proof}[Proof of Lemma \ref{lemma:LaSalle}]
Fix any initial state $w(0)$ and consider the trajectory $(w(t), t\geq 0)$ of the
closed-loop system \eqref{eq:model.3}.
Lemma \ref{lemma:decreasing.1} implies a compact set 
$\Omega_0 := \Omega(w(0)) \subset S$ such that $w(t)\in\Omega_0$ 
for $t\geq 0$ and $\dot V_1(w) \leq 0$ in $\Omega_0$.
Let $W_1 :=\{w\in \Omega_0 | \ \dot P^g= \dot P^l= \dot \omega=0\}$.
Then \eqref{eq:ProofThm.4} implies that $w\in W_1$ if and only if
$\dot V_1(w) = 0$.
According to LaSalle's invariance principle (\cite[Theorem 4.4]{Khalil:Nonlinear})
the solution trajectory $(w(t), t\geq 0)$ converges to the largest invariant set 
contained in $W_1$, proving the first assertion.

For the second assertion, fix any $w(0)\in W_1^*$.  We claim that $w(0)$ 
must be an equilibrium point of \eqref{eq:model.3}.
Since $W_1^*$ is invariant we have
\bq
\dot P^g(t) \ = \ \dot P^l(t) \ = \ \dot \omega(t) \ = \ 0, \quad t\geq 0
\label{eq:ProofThm.6}
\eq
It suffices to prove that $\dot w(t) = 0$ for $t\geq 0$, i.e., 
$\dot{\tilde\theta} = 0$ and $\dot\lambda = 0$ for $t\geq 0$.

Since $P^g(t), P^l(t), \omega(t)$ are bounded (Lemma \ref{lemma:decreasing.1}),
\eqref{eq:ProofThm.6} implies that
$$(P^g(t), P^l(t), \omega(t)) \ \equiv \ (P^{g\infty}, P^{l\infty}, \omega^{\infty}) $$
for some finite constants $(P^{g\infty}, P^{l\infty}, \omega^{\infty})$.
Hence 
\bqn
\dot{\tilde\theta}(t) & = & C^T \omega^\infty \ \ = \ \ \text{constant}
\eqn
implying that $\tilde\theta(t)$ grows linearly in $t$, contradicting that $\tilde\theta(t)$ is bounded
unless $\dot{\tilde\theta} = 0$ for $t\geq 0$.   Similarly
	\bqn
	\dot{\lambda}(t) & = & \Gamma^{\lambda}\left(P^{g\infty} -P^{l\infty}-p\right)
	\ \ = \ \ \text{constant}
	\eqn
Hence the boundedness of $\lambda(t)$ implies that $\dot{\lambda}(t)= 0$ for $t\geq 0$.
This proves that any $w(0) \in W_1^*$ is an equilibrium point.
\end{proof}
 
If all inequalities in A2 are strict, then the equilibrium point $w^*$ of the closed-loop
system \eqref{eq:model.3}
is unique (Theorem \ref{thm:1}.2) and Lemma \ref{lemma:LaSalle} implies that $w(t)$
converges to $w^*$ \cite[Corollary 4.1, p. 128]{Khalil:Nonlinear} as $t\to \infty$. 
When there are multiple equilibrium points, Lemma \ref{lemma:LaSalle} 
is not adequate to conclude asymptotic stability.
We use instead a more direct argument due to \cite{Changhong:Design, LiZhaoChen2016}.

\begin{proof}[Proof of Theorem \ref{thm:stability.2}]
Fix any initial state $w(0)$ and consider the trajectory $(w(t), t\geq 0)$ of the
closed-loop system \eqref{eq:model.3}.
As mentioned in the proof of Lemma \ref{lemma:LaSalle},
$w(t)$ stays entirely in a compact set $\Omega_0$.   Hence there exists an infinite 
sequence of time instants ${t_k}$ such that $w(t_k)\to \hat {w}^*$ as $k\to\infty$, 
for some $\hat w^*$ in $W_1^*$.  Lemma \ref{lemma:LaSalle} guarantees that
$\hat w^*$ is an equilibrium point of the closed-loop system \eqref{eq:model.3}
and hence $H(\hat w^*) = \hat w^*$.
Use this specific equilibrium point $\hat {w}^*$ in the definition of $V_1$ 
in (\ref{eq:lyapunov.1}) to get the Lyapunov function:
\begin{align*}
	V_1(w) & =  - \left( H(w)-w \right)^T F(w)  \, - \, \frac{1}{2} ||H(w)-w||^2_2 \\
	&\quad +\frac{1}{2}k(w-\hat w^*)^T\Gamma_1^{-2}(w-\hat w^*) 
\end{align*}
Since $\dot V_1\leq 0$, $V_1(w(t))$ converges.  Moreover
it follows from the continuity of $V_1$ that 
\bqn
\lim_{t\to\infty} V_1(w(t)) \ \ = \ \ \lim_{k\to\infty} V_1(w(t_k)) & = & V_1(\hat w^*) \ \ = \ \ 0
\eqn 
The quadratic term $(w-\hat w^*)^T\Gamma_1^{-2}(w-\hat w^*)$
in $V_1$ then implies that $w(t)\to \hat {w}^*$ as $t\to \infty$.
\end{proof}

\end{document}